\newcommand{\ra}[1]{\renewcommand{\arraystretch}{#1}}
\newcommand\myshade{85}
\colorlet{mylinkcolor}{teal}
\colorlet{mycitecolor}{teal}
\colorlet{myurlcolor}{teal}
\newcommand{\id}{\mathbbm{1}}
\newcommand{\idch}{\mathcal{I}}
\newcommand{\I}{\mathrm{i}}
\DeclarePairedDelimiterX{\norm}[1]{\lVert}{\rVert}{#1}
\DeclarePairedDelimiterX{\abs}[1]{\lvert}{\rvert}{#1}
\let\oldabs\abs
\def\abs{\@ifstar{\oldabs}{\oldabs*}}
\let\oldnorm\norm
\def\norm{\@ifstar{\oldnorm}{\oldnorm*}}
\DeclareMathOperator*{\argmax}{arg\,max}
\DeclareMathOperator*{\argmin}{arg\,min}
\DeclareMathOperator{\Tr}{Tr}
\DeclareMathOperator{\tr}{tr}
\DeclareMathOperator{\spn}{span}
\let\Re\relax
\let\Im\relax
\DeclareMathOperator{\Re}{Re}
\DeclareMathOperator{\Im}{Im}
\renewcommand{\vec}{\mathbf}
\let\originalleft\left
\let\originalright\right
\renewcommand{\left}{\mathopen{}\mathclose\bgroup\originalleft}
\renewcommand{\right}{\aftergroup\egroup\originalright}
\newtheorem{theorem}{Theorem}
\newtheorem{corollary}{Corollary}[section]
\newtheorem{lemmaa}{Lemma}[section]
\newtheorem{theorema}{Theorem}[section]
\theoremstyle{remark}
\newcommand{\npar}{\mathfrak{p}}
\newcommand{\QFI}{\mathcal{F}}
\newcommand{\CQFI}{\mathfrak{F}}
\newcommand{\CQFIbnd}{\mathfrak{B}}
\newcommand{\parvec}{\boldsymbol{\theta}}
\newcommand{\w}{q}  
\newcommand{\rhospace}[1]{\mathcal{S}\left(#1\right)}
\newcommand{\hilb}{\mathcal{H}}
\renewcommand{\t}[1]{\textrm{#1}}
\begin{document}

\title{Probe incompatibility in multiparameter noisy quantum metrology}

\author{Francesco Albarelli}
\affiliation{Faculty of Physics, University of Warsaw, 02-093 Warszawa, Poland}
\author{Rafał Demkowicz-Dobrzański}
\affiliation{Faculty of Physics, University of Warsaw, 02-093 Warszawa, Poland}

\begin{abstract}
We derive fundamental bounds on the maximal achievable precision in multiparameter noisy quantum metrology, valid under the most general entanglement-assisted adaptive strategy, which are tighter than the bounds obtained by a direct use of single-parameter results.
This allows us to study the issue of the optimal probe incompatibility in the simultaneous estimation of multiple parameters in generic noisy channels, while so far the issue has been studied mostly in effectively noiseless scenarios (where the Heisenberg scaling is possible).
We apply our results to the estimation of both unitary and noise parameters, and indicate models where the fundamental probe incompatibility is present.
In particular, we show that in lossy multiple arm interferometry the probe incompatibility is as strong as in the noiseless scenario.
Finally, going beyond the multiple-parameter estimation paradigm, we introduce the concept of \emph{random quantum sensing} and show how the tools developed may be applied to multiple channel discrimination problems.
As an illustration, we provide a simple proof of the loss of the quadratic advantage of time-continuous Grover algorithm in presence of dephasing or erasure noise.
\end{abstract}

\maketitle

\section{Introduction}

Precise characterization of parameters of physical systems is an important task, both from a technological as well as a purely scientific perspective.
Understanding the limits on how precisely the parameters can be estimated, given the most general estimation protocols admitted by quantum mechanics,
touches also upon the foundations of quantum mechanics itself.
When it comes to \emph{single parameter} quantum metrology, both the theory and practical applications are now in their maturity stage.
The theory, not only provides the fundamental bounds that indicate in which models one may expect the most promising quantum enhancements~\cite{Escher2011, Demkowicz-Dobrzanski2012, Demkowicz-Dobrzanski2014, Demkowicz-Dobrzanski2017, Zhou2017}, but also provides explicit protocols based on the use of squeezed states and quantum error-correction ideas to reach these limits~\cite{Escher2011, Demkowicz2013, Zhou2017, Zhou2019e, Zhou2020}.
The most spectacular application is the use of squeezed states of light in modern gravitational wave detectors~\cite{LIGO2019,Virgo2019}, which operate surprisingly close to the fundamental limits~\cite{Demkowicz2013}, taking into account the level of optical losses present in the devices.

\emph{Multiparameter} estimation problems are abundant, e.g. vector field estimation~\cite{Baumgratz2015}, multi-arm interferometry~\cite{Humphreys2013}, waveform estimation~\cite{Tsang2011}, etc., and this research domain has rightfully attracted an increasing amount of attention in recent years~\cite{Szczykulska2016,Albarelli2019c}.
The aim of multiparameter quantum metrology is to obtain the most precise estimates of several parameters \emph{simultaneously}, i.e. within a single experimental configuration.
A quantum metrology experiment can schematically be divided in three stages: the preparation of a probe state, the sensing stage when the probe's evolution is affected by the parameters of interest and finally the measurement.
Assuming that the evolution of the probe is  fixed by the physical nature of the problem, the final goal is to find a combination of probe state and measurement that achieves the best possible precision of estimating unknown parameters of the evolution.
As such, a metrological problem may be regarded formally as a quantum channel estimation problem.
The multiparameter character, however, adds another layer of complexity on top of single parameter scenarios; identification of fundamental bounds and optimal protocols becomes much more challenging.

In many practical situations the channel can be probed many times and the actual optimal protocols may be adaptive.
Evaluating the power of adaptive strategies, especially in the presence of a noisy environment, is a challenging task appearing in various contexts throughout the whole field of quantum information theory~\cite{Terhal2015, Pirandola2017, Zhuang2020b}.
It is remarkable that this problem has been completely resolved in the case of single-parameter quantum metrology~\cite{Demkowicz-Dobrzanski2014, Demkowicz-Dobrzanski2017, Zhou2017, Zhou2019e, Zhou2020}, proving the effectiveness of the theoretical methods developed.
The main goal of this paper is to generalize these methods to the multiparameter scenario.

\begin{figure}[t!]
\includegraphics[width=.97\columnwidth]{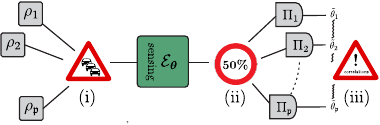}
\caption{Three potential sources of incompatibility in multiparameter quantum metrology:
(i) different input probe states may be optimal to estimate different parameters---this may in the worst case lead to a $\npar$-fold increase in the resources consumed, compared with best case where a single probe state is optimal for estimating all parameters;
(ii) different incompatible measurements may be optimal for extracting information about different parameters---in generic noisy metrological models this will require at most double the required resources in the asymptotic regime, where many probes may be used;
(iii) if the chosen parametrization is not ``natural'' for the problem at hand the resulting estimators will manifest correlations, and the imperfect knowledge of some parameters will have an impact on the effective estimation uncertainty of the remaining ones.
}
\label{fig:intro}
\end{figure}

The most intriguing aspect of multiparameter quantum metrology is the existence of protocols for simultaneous estimation that achieve a better overall precision than estimating each parameter separately, given the same amount of resources (e.g. number of particles, total sensing time).
Indeed, several theoretical~\cite{Humphreys2013,Baumgratz2015,Yuan2016b,Gessner2018,Kura2017,Li2019a,Gessner2020} and experimental~\cite{Polino2019,Hou2020,Hou2021,Hou2021a} studies have shown such an advantage in noiseless and error-corrected~\cite{Gorecki2020} scenarios.
In the best case, it may be possible to estimate all the parameters in a single experiment with the same precision obtainable in separate experiments for each parameter.
The conditions for such a maximal advantage, known as \emph{compatibility} conditions, have been laid out in~\cite{Ragy2016}.
We report them here: \emph{``(i) existence of a single probe state allowing for optimal sensitivity for all parameters of interest, (ii) existence of a single measurement optimally extracting information from the probe state on all the parameters, and (iii) statistical independence of the estimated parameters.''}
These three aspects are pictorially represented in Fig.~\ref{fig:intro}.

The impossibility to satisfy condition (ii) is known as \emph{measurement} incompatibility, and it boils down to the fact that the optimal observables to estimate different parameters might not commute.
This issue has always been central in quantum estimation theory~\cite{helstrom1976quantum,Holevo2011b,Paris2009,Liu2019d,Demkowicz-Dobrzanski2020}: from the seminal studies of almost half a century ago~\cite{Yuen1973,Belavkin1973,Holevo1976} to recent developments~\cite{Matsumoto2002,Pezze2017,Albarelli2019,Yang2018b,Sidhu2019a,Razavian2020,Lu2020a,Conlon2020,Belliardo2021}.
An important and relatively new observation is that measurement incompatibility will at most double the total mean squared error on the parameters' estimates~\cite{Carollo2019,Tsang2019} when a large number of identical copies of the probe state can be measured collectively~\cite{Kahn2009,Yamagata2013,Yang2018a}.
Importantly, for the generic noisy protocols that are the main focus of this paper, the maximal quantum enhancement amounts to a constant factor gain and the optimal probe states may be effectively approximated by product states of finitely entangled groups of particles~\cite{Jarzyna2013,Chabuda2020}.
As such, the argument on the impact of measurement incompatibility being at most factor of two applies also to the asymptotically optimal strategies for noisy metrology.

Out of the three compatibility conditions (i-iii) stated above, the most relevant one is actually condition (i), being the only one responsible for a different scaling of the asymptotic precision with the number of parameters involved, when comparing optimal protocols for simultaneous estimation with separate ones.
We will refer to violation of condition (i) as \emph{probe} incompatibility\footnote{For convenience we keep the same name also when considering more general probing strategies with multiple uses of the channel.} and it will be the main focus of our analysis.
When it comes to condition (iii), statistical independence of parameters can always be assured by a proper reparametrization, hence, provided one starts with a ``natural'' parametrization for a given model, this aspect of incompatibility can be avoided.

In this paper, we focus primarily on generic noisy channels, where the noise cannot be completely removed without hindering the parameter encoding process.
In this case, the asymptotic precision follows the so called standard quantum limit (SQL) and quantum-enhanced strategies will provide at most a constant gain~\cite{Escher2011,Demkowicz-Dobrzanski2012}.
Such noisy models, while generic and ubiquitous in practice, are much less studied in the multiparameter literature.
A few particular instances have been examined, e.g.~\cite{Tsang2013a,Yue2014,Vidrighin2014,Crowley2014,Ho2020,Friel2020}, but theoretical tools to identify fundamental bounds without neglecting the multiparameter character of the problem, especially probe incompatibility, are missing.
We aim to close this gap.

\subsection{Summary of results}

First, in Sec.~\ref{sec:multiQEst} we set the stage by defining a new figure of merit to quantify incompatibility in multiparameter quantum metrology.
While this figure of merit takes into account all conditions (i)-(iii), see Fig.~\ref{fig:intro}, in the rest of the paper we focus on lower bounds that consider only condition (i), probe incompatibility.

To this end, in Sec.~\ref{sec:multiQFIsum} we derive a new class of multiparameter precision bounds that hold for the most general adaptive strategy depicted in Fig.~\ref{fig:schemes}(a), extending previous single-parameter results~\cite{Fujiwara2008,Escher2011,Demkowicz-Dobrzanski2012,Demkowicz-Dobrzanski2014}.
In particular, we derive bounds that apply to various scenarios, as summarized in Table~\ref{tab:bounds}.
These represent our main technical contribution and, conveniently, the optimal bounds in this class can be evaluated with semidefinite programs, presented in Appendix~\ref{app:SDP}.

The idea behind the derivation is simple but powerful and for a single-parameter it has proven to be the most powerful and widely applicable approach.
A noisy channel can always be purified (formally) as a unitary interaction between the system and an inaccessible environment, by choosing purifications that contain as little information as possible about the parameters one can obtain tight bounds on the metrological precision.
Thanks to a few technical adjustments, existing single-parameter derivations can be extended to multiple parameters in a way that takes into account probe incompatibility.
Actually, these bounds are more general and apply also to a scenario that we call random quantum sensing, depicted in Fig.~\ref{fig:schemes}(b), in which different channels chosen at random must be probed by the same state.

After introducing these tools, we then apply them to study probe incompatibility for a few paradigmatic models of noisy multiparameter quantum metrology in Sec.~\ref{sec:applications}.
Finally, in Sec.~\ref{sec:discrimination} we show how multiparameter metrological bounds can be used to assess the ultimate performance of adaptive strategies in the task of discriminating between multiple quantum channels, using a geometric argument sketched in Fig.~\ref{fig:schemes}(c).
This approach allows us to close a conjecture presented in Ref.~\cite{Demkowicz-Dobrzanski2015} and show that the quantum computational speed-up of Grover search is ruined by dephasing and erasure noise.

\subsubsection*{Relation to previous works}

A multiparameter bound for noisy channels and adaptive strategies was recently obtained without relying on purification arguments~\cite{Katariya2020b}.
Despite being simple to evaluate, it is both less general and less tight than the optimal purification-based bounds we introduce here\footnote{Indeed, in Appendix~\ref{app:RLD} we show that the class of bounds we introduce includes also the one of~\cite{Katariya2020b}, which is generally suboptimal and might not give a complete insight into probe incompatibility.}.
Moreover, a purification-based methodology was introduced in~\cite{Chen2017a}; yet, by construction it does not take into account probe incompatibility, since it requires performing a different convex optimization for each parameter.
Finally, multiparameter bounds for paradigmatic models in optical metrology with losses have been obtained by purifying the dynamics~\cite{Tsang2013a,Yue2014}, but without considering the possibility of adaptive strategies.

\section{Multiparameter quantum estimation}
\label{sec:multiQEst}

\begin{figure}
\includegraphics[width=.97\columnwidth]{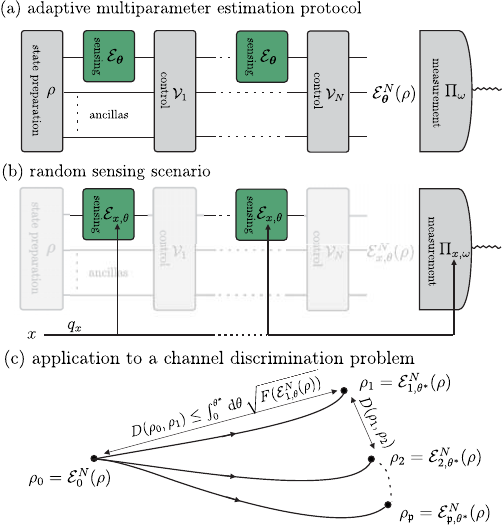}
\caption{Schematic representation of the three operational tasks considered in this paper:
(a) estimation of the parameters $\parvec=(\theta_1, \dots, \theta_\npar)$ appearing in the quantum channel $\mathcal{E}_{\parvec}$, under the most general entanglement-assisted adaptive strategy with $N$ uses of the channel; (b) in random quantum sensing a single parameter $\theta$ is encoded by a channel $\mathcal{E}_{x,\theta}$ randomly chosen from $\npar$ channels with probability $q_x$, the extracted $x$ is revealed only before the measurement stage; (c) intuitive representation of the geometrical argument of Sec.~\ref{sec:discrimination} connecting the error in the discrimination of $\npar$ quantum channels with the precision of random sensing.
}
\label{fig:schemes}
\end{figure}

\subsection{Quantum Fisher information matrix}

We consider a vector $\parvec=[\theta_1,\dots,\theta_\npar]^T$ of $\npar$ real parameters that are encoded on a quantum state $\rho_{\parvec}=\mathcal{E}_{\parvec}(\rho)$ via the parameter-dependent quantum channel $\mathcal{E}_{\parvec}$ acting on the initial state $\rho$.
In this work we consider only finite-dimensional systems.

A generic measurement is described by a positive operator valued measure (POVM), i.e. a set of positive operators $\Pi_\omega \geq 0$ satisfying $\sum_\omega \Pi_\omega = \id$.
We take the set of possible outcomes $\omega$ to be finite-dimensional without loss of generality.
The parameter-dependent classical probability distribution is obtained from the Born rule $p_{\parvec}(\omega)=\Tr \left[ \Pi_\omega \rho_{\parvec} \right]$.

An estimator $\tilde{\parvec}(\omega)$ is a function that maps the random outcomes to estimated parameters values.
The precision of the estimator is quantified by the mean square error matrix:
\begin{equation}
\Sigma_{\tilde{\parvec}} \coloneqq \sum_\omega  p_{\parvec}(\omega) [ \tilde{\parvec}(\omega) - \parvec] [ \tilde{\parvec}(\omega) - \parvec]^T.
\end{equation}
In particular, we consider locally unbiased estimators that satisfy $\sum_x p_{\parvec^*}(\omega) \tilde{\parvec}(\omega) = \parvec^*$ and $\sum_\omega  \partial_{\theta_i} p_{\parvec}(\omega) \mid_{\parvec = \parvec^*} \tilde{\theta}_j(\omega) = \delta_{ij}$, i.e. they are unbiased locally around the true value $\parvec^*$ of the parameter vector.
In the following we will implicitly take all derivatives wrt the parameters $\theta_j$ evaluated at the true value and make the true value explicit only if needed.
For this class of estimators the mean square error matrix is equal to the covariance matrix and for any POVM it satisfies the quantum Cramér-Rao bound (QCRB)~\cite{helstrom1976quantum,Holevo2011b,Paris2009,Liu2019d}
\begin{equation}
\label{eq:matrixQCRB}
\Sigma_{\parvec} \geq \QFI^{-1}(\rho_{\parvec}),
\end{equation}
i.e. $\Sigma_{\parvec}-\QFI^{-1}(\rho_{\parvec})$ is positive semidefinite, where we have introduced the quantum Fisher information (QFI) matrix
\begin{equation}
\label{eq:QFImatdef}
\QFI_{ij}(\rho_{\parvec}) \coloneqq \Re (\Tr \left[ \rho_{\parvec} L_i L_j \right]),
\end{equation}
written in terms of the symmetric logarithmic derivatives (SLDs), defined by the equations
\begin{equation}
	\partial_{\theta_i} \rho_{\parvec} = \frac{1}{2}\left(L_i \rho_{\parvec} + \rho_{\parvec} L_i\right).
\end{equation}
For single-parameter problems ($\npar=1$) it is always possible to find a POVM and a locally unbiased estimator to attain the QCRB
(formally, at a given operation point $\theta^*$, or, more practically, uniformly in the asymptotic limit~\cite{Hayashi2005}).
A possible choice for the optimal measurement is a projective measurement on the eigenbasis of the SLD operator.
The multiparameter matrix bound~\eqref{eq:matrixQCRB}, however, is not always attainable due to the possible non-commutativity of the SLDs \cite{Hayashi2005, Holevo2011b, Demkowicz-Dobrzanski2020}.

\subsection{Lower bounds on the total variance}

In an experiment that aims to estimate multiple parameters, in general there exists no strategy yielding a minimal covariance matrix~\cite{Silvey1980}.
A common choice is to quantify the overall error with a scalar, which we call the \emph{total variance}
\begin{equation}
\label{eq:totvariance}
\Delta_W^2 \tilde{\parvec} \coloneqq \tr[ W \Sigma_{\tilde{\parvec}} ],
\end{equation}
where $W>0$ is a positive cost matrix (when $W=\id_\npar$ we simply write $\Delta^2 \tilde{\parvec}$).
Every cost matrix can be decomposed as $W= \sum_{k=1}^{\npar} w_k \vec{e}_k \vec{e}_k^T$, where $\vec{e}_k$ are orthonormal vectors and $w_k > 0$ are strictly positive weights.
A choice of cost matrix distinguishes a particular set of parameters, associated with the eigenvectors of $W$,for which the estimation cost is determined by the corresponding eigenvalues $w_k$.
In a sense, the cost matrix determines which parameters we regard as ``separate''.
In what follows we assume to be working in the parametrization induced by $W$, taking into account only the effect of the weights.

From the matrix QCRB~\eqref{eq:matrixQCRB} we can lower bound the total variance as
\begin{equation}
\label{eq:JinvBound}
\Delta_{W}^2 \tilde{\parvec} \geq \sum_{x=1}^{\npar} w_x [\QFI^{-1}]_{xx} \geq \sum_{x=1}^\npar \frac{w_x}{\QFI_{xx}}  \geq \frac{\npar^2}{\sum_{x=1}^{\npar} w_x^{-1} \QFI_{xx}  },
\end{equation}
where we have written the Fisher information matrix elements $\QFI_{xx}$ in the $W$ eigenbasis.

The first inequality $\Delta_{W}^2 \tilde{\parvec} \geq  \tr W \QFI^{-1}$ is generally not attainable due to measurement incompatibility.
A more fundamental bound is the Holevo Cramér-Rao bound, which is asymptotically attainable \cite{Yang2018a,Demkowicz-Dobrzanski2020}.
However, the Holevo Cramér-Rao bound is at most $2 \tr W \QFI^{-1}$~\cite{Tsang2019}, therefore the asymptotic effect of incompatibility on the total variance is at most a factor 2.
As argued in the introductory section, in this paper we are not concerned with measurement incompatibility and we will only consider bounds obtained from the QFI matrix.

The second inequality in~\eqref{eq:JinvBound} is a property of positive matrices and it means that statistical correlations among the parameters, i.e. a non-diagonal $\QFI$, increase the error on the $x$-th parameter with respect to the single-parameter QCRB $1/\QFI_{xx}$, i.e. assuming all the other parameters are known.
This inequality will be tight, provided the parameter basis choice induced by the cost matrix $W$ coincides with the parameter basis in which the QFI matrix is diagonal (we will refer to this as a `natural' parametrization).
The third inequality is obtained from the Cauchy–Schwarz inequality (or equivalently from the inequality between harmonic and arithmetic mean)
and is saturated when the diagonal elements are all equal.
In multiparameter quantum metrology, the weaker bound~\eqref{eq:JinvBound} has often been used to avoid computing the inverse and simplify calculations, see e.g.~\cite{Ballester2004a,Kura2017,Ge2017,Katariya2020b}.

The lower bound~\eqref{eq:JinvBound} prompts us to consider the quantity
\begin{equation}
\CQFI_{\vec{\w}}(\rho) \coloneqq \sum_{x=1}^\npar w_x^{-1} \QFI\left( \mathcal{E}_{\theta_x} ( \rho ) \right) =  \sum_{x=1}^\npar \w_x
\QFI\left(\mathcal{E}_{\theta_x}(\rho) \right),
\end{equation}
which we call the \emph{total} QFI, with weights $\w_x = w_x^{-1}$. For the sake of a greater generality, and anticipating the application in quantum channel discrimination problems discussed in Sec.~\ref{sec:discrimination}, we will be using the same notation also when, instead of considering a single channel with multiple parameters, we will consider a single parameter being encoded using \emph{different} quantum channels.

In this second case, the weight $\w_x$ corresponds to the probability (up to renormalization) that a given channel has acted on the state
and the total QFI has an operational meaning in a scenario that we will name  \emph{random quantum sensing}, schematically represented in Fig.~\ref{fig:schemes}-(b).
In this scenario, Alice sends a probe state $\rho$ through a $\theta$-dependent channel $\mathcal{E}_{x,\theta}$, selected randomly according to the $\theta$-independent probability $\w_x$ from the ensemble of $\npar$ channels.
Alice does not know which channel $x$ is selected and has to choose a unique state $\rho$ to send in all runs.
At the other end of the channel, Bob knows both Alice's probe state $\rho$ and the random value $x$, so each time he can implement an optimal measurement and estimator for the parameter $\theta$.
Alice's goal is to help Bob estimate $\theta$ as precisely as possible, therefore she has to prepare a state that is sensitive to $\theta$ for all the different channels, according to their probability.
The quantity $\CQFI_{\vec{\w}}$ quantifies the precision that Bob obtains in estimating $\theta$ when Alice chooses to send $\rho$.
This task, though different, is reminiscent of the quantum random access codes protocols~\cite{Ambainis2009,Tavakoli2015a}, in a sense that the sender must prepare a quantum state so that the information is extracted optimally even though it is a priori not clear how the information will be encoded (random quantum sensing) or
which bit of information will be read out (random access codes).
Note that in the random sensing scenario the total QFI leads by construction to a tight bound on the estimation precision of $\theta$, without the additional saturability issues affecting the standard multiparameter setting indicated in~\eqref{eq:JinvBound}.

\subsection{Incompatibility measures}
\label{sec:incomp_meas}

Here we introduce a quantity that captures the incompatibility of a given multiparameter quantum channel estimation problem.
We aim at a quantity that is $1$ whenever there are no incompatibility issues, and increases to some maximum value when the optimal probe for estimating a given parameter gives no information on the remaining ones.
Moreover, we would like the quantity not to depend on the choice of the cost matrix nor on the particular parametrization.
The following natural quantity which we refer to as the \emph{incompatibility measure of a multiparameter quantum channel} satisfies these requirements:
\begin{equation}
\label{eq:incompmeasure}
\mathfrak{I}^*(\mathcal{E}_{\parvec}) \coloneqq \max_{\{\vec{w}_x\}} \left(\frac{\min
\limits_{ \rho, \Pi , \tilde\parvec }\Delta_W^2 \tilde{\parvec}}{\sum\limits_x \min\limits_{\rho_x} \vec{w}_x^T \QFI^{-1}(\mathcal{E}_{\parvec}(\rho_x)) \vec{w}_x} \right),
\end{equation}
where $\tilde \parvec$ are locally unbiased estimators\footnote{Equivalently we could define this quantity in terms of the classical Fisher information matrix associated to a POVM, without explicit reference to estimators.} and the arbitrary (not necessarily orthogonal) vectors $\vec{w}_x$ determine the effective cost matrix for the multiparamter estimation problem $W = \sum_x \vec{w}_x  \vec{w}_x^T$.
Vectors $\vec{w}_x$ may be understood as representing a certain scalar function of $\parvec$ to be estimated, corresponding to a choice of a rank-1 cost matrix $W_x = \vec{w}_x  \vec{w}_x^T$.
For a given $\rho_x$ the quantity $\vec{w}_x^T \QFI^{-1}(\mathcal{E}_{\parvec}(\rho_x)) \vec{w}_x$, appearing in the denominator, is an attainable bound on the minimal error of estimating the function when the remaining parameters are treated as nuisance parameters~\cite{Tsang2019,Suzuki2019a} (due to effectively scalar nature of the estimation problem,  measurement incompatibility does not affect attainability of the bound in this case).

Intuitively, the incompatibility measure captures the worst case scenario where the ratio between the cost of estimating $\npar$ scalar functions simultaneously is the largest compared with the cost of estimating them separately with the optimal probes.
Note that $\mathfrak{I}^*$ is parametrization independent, since for any reparametrization $\parvec \rightarrow A \parvec$ (where $A$ is some invertible matrix) we will obtain the identical formula by the following change of the cost vectors  $\vec{w}_x \rightarrow A^T \vec{w}_x$.
Moreover, this quantity is indeed $1$ when there is a single state $\rho$ that gives minimal cost for all $x$.

Admittedly, this quantity is challenging to compute, especially for multiparameter metrological models in presence of noise.
Therefore, in this paper we will be using an efficiently computable lower bound, the \emph{probe incompatibility measure} based on the total QFI:
\begin{equation}
\label{eq:probe_inc_cost}
\mathfrak{I}^* (\mathcal{E}_{\parvec}) \geq \mathfrak{I}(\mathcal{E}_{\parvec}) \coloneqq \npar \left( \max_{\rho} \sum_{x=1}^\npar \frac{ \QFI_{xx} \left( \mathcal{E}_{\parvec}( \rho ) \right) }{ \max_{\rho_x} \QFI_{xx} \left( \mathcal{E}_{\parvec} ( \rho_x )  \right) } \right)^{\!\!-1}
\end{equation}
where some `natural' parametrization has been fixed---see Appendix~\ref{app:parametrizationbound} for the derivation of the bound.
This way we obtain a quantity that takes values in the range $1 \leq \mathfrak{I}(\mathcal{E}_{\parvec})  \leq \npar $, where again $1$ indicates perfect compatibility and $\npar$ maximal incompatibility (when the best strategy is to estimate the parameters separately and the state that maximizes the QFI for a given $x$ yields zero QFI for all other $x$).
In what follows, instead of
$\max_{\rho_x} \QFI_{xx} \left( \mathcal{E}_{\parvec} ( \rho_x )\right)$ we will also write
$\max_{\rho_x} \QFI \left( \mathcal{E}_{\theta_x} ( \rho_x ) \right)$, with a single scalar parameter $\theta_x$, to indicate that this quantity refers to essentially a single parameter problem, where all other parameters can be regarded as perfectly known.

Unlike the original incompatibility measure defined in Eq.~\eqref{eq:incompmeasure}, the lower bound~\eqref{eq:probe_inc_cost} is in general parametrization-dependent (though it is invariant under rescaling of parameters).
For the lower bound~\eqref{eq:probe_inc_cost} to hold, one must choose a `natural' parametrization, such that the QFI matrices corresponding to the optimal states $\argmax_{\rho_x} \QFI_{xx} \left( \mathcal{E}_{\parvec} ( \rho_x )  \right)$ are diagonal.
Nonetheless, we argue that $\mathfrak{I}(\mathcal{E}_{\parvec})$ is meaningful to study on its own for any given parametrization.
While $\mathfrak{I}^*(\mathcal{E}_{\parvec})$ takes into account all conditions (i)-(iii) stated in the introductory section, the probe incompatibility measure $\mathfrak{I}(\mathcal{E}_{\parvec})$ singles out the effect of (i), as the name suggests.
In principle, one can check a posteriori if the QFI matrices of the optimal states are diagonal to gauge the `naturalness' of the chosen parametrization.

Finally, let us note, that when the probe incompatibility measure is applied to the random sensing scenario, the quantity~\eqref{eq:probe_inc_cost} is actually the true incompatibility cost and not a lower bound, as there is a clear distinction between the parameters of different channels, unlike in the genuine multiparameter scenario.

\section{Bounds on the total QFI}
\label{sec:multiQFIsum}

In this section we study the total QFI of a collection of $\npar$ quantum channels (i.e. completely-positive trace-preserving maps) $\mathcal{E}_{x,\theta_x} (\rho)=\sum_{j=1}^{r_x} K_{x,j} \rho K^{\dag}_{x,j}$, labeled by $x=1\,\dots,\npar$, each with a dependence on a parameter of interest $\theta_x$.
To ease the notation we will mostly suppress the label $x$ and identify the different channels only through their parameter $\theta_x$ as $\mathcal{E}_{\theta_x}$.
Depending on the physical context, the resulting bounds will have applications either in the standard multiparameter estimation setting or in the random sensing scenario.
We consider channels that act on operators on the same input Hilbert space $\hilb$, but the outputs can have different dimensions.
When needed we  will use a shorthand notation for channels acting on pure states $\mathcal{E}(\ket{\psi})\equiv\mathcal{E}(|\psi\rangle\langle \psi|)$.

In this section, we derive an attainable bound for a single use of the channel and an upper bound for generic strategies with $N$ uses of the channel.
In Appendix~\ref{app:previous} we also show that some previously known results, including the bound of~\cite{Katariya2020b}, can be obtained within our approach.
For practical applications it is crucial to note that all the bounds we derive (the single-use bound $\CQFI_{\vec{\w}}$~\eqref{eq:singleuseCEbound}, and the asymptotic channel bound $\CQFIbnd_{\vec{\w}}$~\eqref{eq:SQLbound}) can be computed with semidefinite programs (SDPs), similarly to their single-parameter counterparts \cite{Demkowicz-Dobrzanski2012,Koodynski2013,Demkowicz-Dobrzanski2017}; see  Appendix~\ref{app:SDP}.
A MATLAB implementation can be found in~\cite{githubrepo}.
A summary of all the bounds derived in this paper is shown in Table~\ref{tab:bounds}.

\begin{table}
\centering
\ra{1.3}
\begin{tabular}{ccc}
\toprule
Case & Eq. & SDP \\
\midrule
\makecell{Single channel use ($N=1$)} & \eqref{eq:singleuseCEbound} & Yes \\
\makecell{Parallel strategy, finite-$N$} & \eqref{eq:parallelCEbound} & Yes \\
\makecell{Adaptive strategy, finite-$N$} & \eqref{eq:seqCEbound} & No \\
\makecell{ Parallel \& adaptive,  asymp. SQL} & \eqref{eq:SQLbound} & Yes \\
\makecell{Markovian noise, adaptive
\\ asymp. SQL (in $T$)} & \eqref{eq:SQLboundLind} & Yes \\
\bottomrule
\end{tabular}
\caption{Summary of the bounds on the total QFI obtained in this paper, the SDPs can be found in Appendix~\ref{app:SDP}.}
\label{tab:bounds}
\end{table}

\subsection{Single use of the channel}
Here we focus on the total channel QFI, i.e. the optimal total QFI:
\begin{equation}
\CQFI_{\vec{\w}} \coloneqq \max_\rho \CQFI_{\vec{\w}}(\rho) = \max_{\rho} \sum_{x=1}^\npar \w_x \QFI \left( \mathcal{E}_{\theta_x} (\rho) \right).
\end{equation}
One can also consider a more general entanglement-assisted strategy, corresponding to the extended channels $\mathcal{E}_{\theta_x} \otimes \idch_A$ ($\idch_A$ denotes the identity channel on an auxiliary Hilbert space) so that the maximization runs over bipartite states $\rho \in \rhospace{\hilb_S \otimes \hilb_A}$, where $\rhospace{\hilb}$ denotes the space of density matrices over $\hilb$.
The total channel QFI for this extended channel is always an upper bound on $\CQFI_{\vec{q}}$.
In this work we do not investigate the difference between assisted an unassisted strategies, therefore the distinction between the two cases is not crucial and will not be made explicit unless necessary, even if the bounds are actually derived for the extended channels.

The most straightforward way to upper bound the total channel QFI is to use the single-parameter channel QFI $\mathfrak{F}_x$ for each parameter:
\begin{equation}
\label{eq:tria_ineq_single_use}
\CQFI_{\vec{\w}} \leq \sum_{x=1}^\npar \w_x \max_{\rho_x}  \QFI \left( \mathcal{E}_{\theta_x}(\rho_x)\right) = \sum_{x=1}^\npar \w_x \mathfrak{F}_x.
\end{equation}
This upper bound, however, does not take into account potential probe incompatibility and indeed when it is attained  the incompatibility cost takes its minimal value $\mathfrak{I}(\mathcal{E}_{\parvec}) = 1$.

In order to obtain a tighter bound we revisit and generalize the derivation of the single-parameter channel QFI~\cite[Th.~4]{Fujiwara2008},
which results in the following.
\begin{theorem}\label{theo:multiparFujiImaiGen}
The total channel QFI of a collection of quantum channels is upper bounded as
\begin{align}
\label{eq:singleuseCEbound}
\CQFI_{\vec{\w}}  \leq
4 \min_{ \mathfrak{h} }  \norm{ \sum_{x=1}^\npar \w_x \alpha_x }
 ,
\end{align}
where $\norm{ \cdot }$ denotes the operator norm (maximal singular value), $\alpha_x \coloneqq \partial_{\theta_x}\!\tilde{\vec{K}}_x^\dag \partial_{\theta_x}\!\tilde{\vec{K}}_x$ with $ \partial_{\theta_x}\!\tilde{\vec{K}}_x \coloneqq  \partial_{\theta_x}\!\vec{{K}}_x-\I h_x \vec{K}_x$  and $\mathfrak{h}=\{ h_x \}_{x=1}^{\npar}$ is a collection of $\npar$ Hermitian matrices, each of dimension $r_x {\times} r_x$.
Equality in~\eqref{eq:singleuseCEbound} is attained when considering the extended channels $\mathcal{E}_{\theta_x} \otimes \idch_A$.
\end{theorem}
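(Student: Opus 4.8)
\emph{Proof plan.} The plan is to reduce the statement to the single-parameter channel QFI formula of Fujiwara and Imai~\cite[Th.~4]{Fujiwara2008}, applied to each extended channel $\mathcal{E}_{\theta_x}\otimes\idch_A$ \emph{evaluated on one common input state}, and then to exploit the crucial structural fact that the Kraus-gauge matrices $h_x$ of distinct channels are mutually independent. The first step is to record the per-input-state form of that result: for any fixed pure $\ket{\psi}\in\hilb_S\otimes\hilb_A$,
\begin{equation}
\label{eq:perstateFI}
\QFI\big((\mathcal{E}_{\theta_x}\otimes\idch_A)(\ket{\psi})\big)=4\min_{h_x}\bra{\psi}\big(\alpha_x\otimes\id_A\big)\ket{\psi}.
\end{equation}
I would obtain~\eqref{eq:perstateFI} by purifying the output state into an environment through the Kraus operators $K_{x,j}\otimes\id_A$ and varying over the $\theta_x$-dependent Kraus gauges $K_{x,j}\mapsto\sum_l(u_x)_{jl}(\theta_x)K_{x,l}$: their derivative at the operating point reproduces exactly $\partial_{\theta_x}\tilde{\vec K}_x$ with $h_x=\I\dot u_x^\dag$ an arbitrary Hermitian $r_x\times r_x$ matrix, and the cross term $\abs{\braket{\Psi_x|\dot\Psi_x}}^2$ appearing in the pure-state QFI is disposed of using $\Re\braket{\Psi_x|\dot\Psi_x}=0$ (trace preservation) together with the global-phase gauge $u_x=e^{\I\varphi_x}\id$, which is among the admissible gauges. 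Since the Kraus operators of $\mathcal{E}_{\theta_x}\otimes\idch_A$ are $K_{x,j}\otimes\id_A$ with the same purely $\hilb_S$-acting gauge freedom, the corresponding ``$\alpha$''-operator is exactly $\alpha_x\otimes\id_A$.

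With~\eqref{eq:perstateFI} in hand the upper bound is immediate. Summing over $x$ with weights $\w_x$ and using that $\alpha_x$ depends only on $h_x$, the $\npar$ termwise minimisations merge into a single minimisation over $\mathfrak{h}=\{h_x\}$, so that for a pure input $\rho=\ket{\psi}\!\bra{\psi}$ one has $\CQFI_{\vec{\w}}(\rho)=4\min_{\mathfrak{h}}\bra{\psi}\big(\sum_x\w_x\alpha_x\big)\otimes\id_A\ket{\psi}$ for the extended channels; bounding this quadratic form by the operator norm and then maximising over $\ket{\psi}$ gives $\CQFI_{\vec{\w}}\le 4\min_{\mathfrak{h}}\norm{\sum_x\w_x\alpha_x}$. (This already refines~\eqref{eq:tria_ineq_single_use}, since $\norm{\sum_x\w_x\alpha_x}\le\sum_x\w_x\norm{\alpha_x}$.)

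For the attainability claim with extended channels I would run a minimax argument. By~\eqref{eq:perstateFI}, the total channel QFI of the extended channels equals $4\max_{\ket{\psi}}\min_{\mathfrak{h}}\bra{\psi}\big(\sum_x\w_x\alpha_x\big)\otimes\id_A\ket{\psi}$, with $\ket{\psi}$ ranging over $\hilb_S\otimes\hilb_A$ with $\dim\hilb_A\le\dim\hilb_S$ (enough for the optimum), i.e.\ over the compact convex set of density matrices on that space. The objective $\Tr[\rho\,(\sum_x\w_x\alpha_x)\otimes\id_A]$ is linear in $\rho$ and, since $\partial_{\theta_x}\tilde{\vec K}_x$ is affine in $h_x$, it is operator-convex---hence convex---in $\mathfrak{h}$; assuming minimal Kraus representations without loss of generality, $\mathfrak{h}\mapsto\norm{\sum_x\w_x\alpha_x}$ is convex and coercive, so the minimisation may be restricted to a compact convex sublevel set. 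Sion's minimax theorem then permits exchanging $\max$ and $\min$, which yields $\CQFI_{\vec{\w}}=4\min_{\mathfrak{h}}\max_{\ket{\psi}}\bra{\psi}\big(\sum_x\w_x\alpha_x\big)\otimes\id_A\ket{\psi}=4\min_{\mathfrak{h}}\norm{\sum_x\w_x\alpha_x}$, with the maximiser furnishing an explicit optimal probe.

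I expect the main obstacle to be the careful, self-contained re-derivation of~\eqref{eq:perstateFI} in this multi-channel packaging---in particular verifying that the residual cross term can always be gauged away and that the gauge freedom of $\mathcal{E}_{\theta_x}\otimes\idch_A$ coincides with that of $\mathcal{E}_{\theta_x}$---together with the routine but non-trivial bookkeeping needed to invoke Sion's theorem (convexity and coercivity in $\mathfrak{h}$, compactness, and the ancilla-dimension bound). The upper-bound half, by contrast, is a one-line consequence once~\eqref{eq:perstateFI} and the independence of the $h_x$ are established.
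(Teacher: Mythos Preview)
Your proposal is correct and follows essentially the same approach as the paper: invoke the per-state Fujiwara--Imai formula~\eqref{eq:perstateFI}, merge the independent minimisations over the $h_x$ into one over $\mathfrak{h}$, and apply a minimax theorem to obtain equality for the extended channels. The only cosmetic difference is that the paper first passes to the reduced state $\rho=\Tr_A|\psi\rangle\langle\psi|\in\mathcal{S}(\hilb_S)$ before the minimax step and then invokes Rockafellar's corollary~37.3.2 (which requires compactness only of the $\rho$-domain), thereby sidestepping your coercivity-and-sublevel-set argument for $\mathfrak{h}$.
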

\begin{proof}
We consider the extended channels $\mathcal{E}_{\theta_x} \otimes \idch_A$, which will give an upper bound for the unextended ones.
First, we note that the total QFI is maximized by a pure state $\ket{\psi} \in \hilb_S \otimes \hilb_A$, because of its convexity.
Each single-parameter QFI in the sum can be written as a minimization over an Hermitian matrix $h_x$~\cite{Fujiwara2008,Demkowicz-Dobrzanski2012}:
\begin{equation}
\label{eq:QFIextpuremix}
\begin{split}
\QFI \left( \mathcal{E}_{\theta_x} \otimes \idch ( \ket{\psi} ) \right) &= 4\min_{ h_x }  \bra{\psi} \partial_{\theta_x}\!\tilde{\vec{K}}_x^\dag \, \partial_{\theta_x}\!\tilde{\vec{K}}_x \otimes \id \ket{\psi} \\
&= 4\min_{ h_x }  \Tr \left[ \rho \, \partial_{\theta_x}\!\tilde{\vec{K}}_x^\dag \, \partial_{\theta_x}\!\tilde{\vec{K}}_x \right],
\end{split}
\end{equation}
where $\rho = \Tr_A |\psi \rangle \langle \psi|$.
Therefore it is equivalent to maximize the lhs of Eq.~\eqref{eq:QFIextpuremix} over pure bipartite states and the rhs over the convex set of mixed states $\rhospace{\hilb}$.
The minimum of the sum of functions of the independent variables $h_x$ is equal to the sum of the minima; thus we obtain
\begin{align}
& \max_{ \ket{\psi} } \sum_{x=1}^\npar \w_x \QFI \left( \mathcal{E}_{\theta_x} \otimes \idch (\ket{\psi}) \right) \\
& = 4 \max_{ \rho \in \mathcal{S}(\mathcal{H}) } \min_{\mathfrak{h} } \Tr \left[ \rho \sum_{x=1}^\npar \w_x \partial_{\theta_x}\!\tilde{\vec{K}}_x^\dag \, \partial_{\theta_x}\!\tilde{\vec{K}}_x \right] \label{eq:maxmin}\\
&= 4 \min_{ \mathfrak{h} }  \max_{ \rho \in \mathcal{S}(\mathcal{H}) } \Tr \left[ \rho \sum_{x=1}^\npar \w_x \partial_{\theta_x}\!\tilde{\vec{K}}_x^\dag \, \partial_{\theta_x}\!\tilde{\vec{K}}_x \right] \\
&=4 \min_{ \mathfrak{h} }  \norm{ \sum_{x=1}^\npar  \w_x \partial_{\theta_x}\!\tilde{\vec{K}}_x^\dag \, \partial_{\theta_x}\!\tilde{\vec{K}}_x }.
\end{align}
The conditions to interchange max and min in~\eqref{eq:maxmin} are the convexity and compactness of the set $\mathcal{S}(\mathcal{H})$, the concavity of the cost function in $\rho$ and its convexity in the variables $h_x$~\cite{Rockafellar1970}[corollary 37.3.2].
These conditions are satisfied, since the function is linear in $\rho$, each term in the sum is convex in the matrix $h_x$ and the sum of convex functions with positive coefficients is convex.
\end{proof}
Since the bound~\eqref{eq:singleuseCEbound} is the norm of a sum of operators, Eq.~\eqref{eq:tria_ineq_single_use} immediately follows from the triangle inequality, where $\mathfrak{F}_x = 4 \min_{ h_x }  \norm{ \partial_{\theta_x}\!\tilde{\vec{K}}_x^\dag \, \partial_{\theta_x}\!\tilde{\vec{K}}_x }$ is the single-parameter extended channel QFI~\cite{Fujiwara2008}.
We will use the bound~\eqref{eq:singleuseCEbound} with weights $\w_x=1/\mathfrak{F}_x$ to evaluate the probe incompatibility cost~\eqref{eq:probe_inc_cost} of the (extended) channel.

\subsection{Asymptotic bound for the most general strategy}

Now we move to a metrological strategy with $N$ uses of the channel $\mathcal{E}_{\theta_x}$.
We consider the most general adaptive strategy, allowing for arbitrary auxiliary systems and unitary control operations $\mathcal{V}_i ( \rho ) = V_i \rho V_i^\dag$, as depicted in Fig.~\ref{fig:schemes}-(a)
The overall channel is thus
$\mathcal{E}^N_{\theta_x}=\mathcal{V}_N \circ (\mathcal{E}_{\theta_x}\otimes\idch) \circ \mathcal{V}_{N-1} \circ  \dots  \circ \mathcal{V}_1 \circ (\mathcal{E}_{\theta_x}\otimes\idch)$\footnote{When the input and output dimensions of the channels are different, unitaries and auxiliary systems are used to make them compatible.}
and we want to upper bound the total QFI of the sequential scheme
\begin{equation}
\label{eq:seqCQFI}
\CQFI_{\vec{\w}}^{N} \coloneqq \max_{\rho, \{ \mathcal{V}_i \}}\sum_{x=1}^{\npar} \w_x \QFI \left(\mathcal{E}_{\theta_x}^N (\rho) \right).
\end{equation}
We stress that the optimal strategy, i.e. not only the initial state but also the control unitaries, cannot use any information about which channel $x$ is applied $N$ times.
The main result for the general strategy is the following.
\begin{theorem}
\label{theo:sequentialCE}
The total channel QFI for an adaptive strategy with $N$ uses of the channel satisfies the bound
\begin{widetext}
\begin{equation}
\label{eq:seqCEbound}
\CQFI_{\vec{\w}}^N\leq 4 \min_{\mathfrak{h}} \left\{ N \left\lVert \sum_{x}\w_x \alpha_x  \right\rVert +  N ( N - 1) \max_x (\norm{ \beta_x }) \left[  \norm{ \sum_{x=1}^\npar \w_x \beta_x }  + 2 \sqrt{  \left( \sum_{x=1}^{\npar} \w_x \right) \norm{ \sum_{x=1}^\npar \w_x \alpha_x } } \right] \right\}.
\end{equation}
\end{widetext}
where $\alpha_x = \left( \partial_{\theta_x}\!\vec{K}_x - \I h_x \vec{K}_x \right)^\dag \left( \partial_{\theta_x}\!\vec{K}_x - \I h_x \vec{K}_x \right)$ and $\beta_x = \left( \partial_{\theta_x}\!\vec{K}_x - \I h_x \vec{K}_x \right)^\dag \vec{K}_x$.
\end{theorem}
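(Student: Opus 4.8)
The strategy is to promote the purification-based derivation of the single-parameter $N$-use bound~\cite{Fujiwara2008,Demkowicz-Dobrzanski2012,Demkowicz-Dobrzanski2014} to the weighted sum defining $\CQFI_{\vec{\w}}^{N}$, carefully tracking the weights and the fact that a single probe state and a single sequence of control unitaries $\mathcal{V}_i$ must serve all $\npar$ channels. As in the proof of Theorem~\ref{theo:multiparFujiImaiGen}, by convexity of the QFI it suffices to bound $\sum_x \w_x \QFI(\mathcal{E}^N_{\theta_x}(\ket{\psi}))$ over pure inputs. Fix a gauge $\mathfrak{h}=\{h_x\}$ and, for each channel, the Stinespring dilation associated with $h_x$, with a fresh environment at every use; this purifies the $N$-use output to some $\ket{\Psi_x}$, and monotonicity of the QFI under partial trace gives $\QFI(\mathcal{E}^N_{\theta_x}(\ket{\psi})) \le 4\big(\langle \partial_{\theta_x}\Psi_x|\partial_{\theta_x}\Psi_x\rangle - |\langle \Psi_x|\partial_{\theta_x}\Psi_x\rangle|^2\big) = 4\,\mathrm{Var}_{\Psi_x}\!\big(\sum_{k=1}^N \tilde G^{(k)}_x\big)$, where $\tilde G^{(k)}_x$ is the (Hermitian) generator of the $k$-th use propagated to the output, obtained from $\ket{\partial_{\theta_x}\Psi_x} = -\I\big(\sum_k \tilde G^{(k)}_x\big)\ket{\Psi_x}$ by the product rule. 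It remains to control $\sum_x \w_x \mathrm{Var}_{\Psi_x}(\sum_k \tilde G^{(k)}_x)$.

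Each $\tilde h^{(k)}_x$ (the unpropagated generator acting on the system--ancilla register together with the $k$-th environment) splits into a part returning that environment to the vacuum --- of operator norm at most $\norm{\beta_x}$ --- and a part leaving a mark there, whose squared magnitude reproduces $\alpha_x$. Writing $\tilde G^{(k)}_x = \tilde G^{(k),\mathrm c}_x + \tilde G^{(k),\mathrm i}_x$ accordingly and using the triangle inequality for $\sqrt{\mathrm{Var}}$, $\sqrt{\mathrm{Var}(\sum_k\tilde G^{(k)}_x)} \le \sqrt{\mathrm{Var}(\sum_k\tilde G^{(k),\mathrm i}_x)} + \sqrt{\mathrm{Var}(\sum_k\tilde G^{(k),\mathrm c}_x)}$. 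In the first term the which-use marks are essentially mutually orthogonal, so after a trace-preservation collapse of the uses following the inserted derivative it reduces to $\sum_{k}\Tr[\sigma^{(k-1)}\alpha_x]$ with $\sigma^{(k-1)}$ the state entering the $k$-th use; since the probe and the controls are channel-independent, $\sigma^{(k-1)}$ is the same for all $x$, so $\sum_x\w_x\sum_k\Tr[\sigma^{(k-1)}\alpha_x] = \sum_k\Tr[\sigma^{(k-1)}\sum_x\w_x\alpha_x] \le N\norm{\sum_x\w_x\alpha_x}$, exactly by the linearity-plus-operator-norm step already used for~\eqref{eq:singleuseCEbound}; this is the first term of~\eqref{eq:seqCEbound}.

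The harder contribution is $\sum_x\w_x\,\mathrm{Var}(\sum_k\tilde G^{(k),\mathrm c}_x)$: the coherent part multiplies the state by a $\beta_x$-type operator while keeping every environment in the vacuum, hence it propagates across uses, and expanding the variance produces the $N(N-1)$ ordered pairs of distinct uses. In each such pair one factor $\beta_x$ remains paired linearly with the weight --- these collect into $\norm{\sum_x\w_x\beta_x}$ --- while the other $\beta_x$, together with the propagation through the intermediate uses, must be bounded uniformly over the family and over the adversarial controls, producing the prefactor $\max_x\norm{\beta_x}$. The mixed contribution, where one inserted derivative is coherent and the other incoherent, is treated by Cauchy--Schwarz, pairing a $\norm{\beta_x}$ against a $\sqrt{\alpha_x}$-type factor and collecting the latter into $\sqrt{(\sum_x\w_x)\,\norm{\sum_x\w_x\alpha_x}}$ (the $\sum_x\w_x$ arising from splitting the weights in the Cauchy--Schwarz step); retaining the full variance, i.e.\ the $-|\langle\Psi_x|\partial_{\theta_x}\Psi_x\rangle|^2$ term, is what lets the coherent contribution enter through a spread rather than a bare norm and fixes the constant $2$. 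Assembling the pieces and minimizing over $\mathfrak{h}$ yields~\eqref{eq:seqCEbound}; since all estimates are operator-norm bounds, they are dimension-free and hold for arbitrary auxiliary systems, i.e.\ for the extended channels.

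I expect the coherent cross-term bookkeeping to be the crux. One must (i) make the coherent/incoherent splitting of the per-use generators precise enough that the cross-use terms are genuinely controlled by $\beta_x$ rather than by the larger $\sqrt{\norm{\alpha_x}}$ for the intermediate uses; (ii) decide which boundary factors may still carry a weight (yielding $\norm{\sum_x\w_x\beta_x}$ or $\norm{\sum_x\w_x\alpha_x}$) and which must be taken in the worst case over $x$ and over all control unitaries simultaneously (yielding $\max_x\norm{\beta_x}$); and (iii) carry out the summation over pairs of uses so that exactly the coefficients $N$ and $N(N-1)$ appear, with no double counting.
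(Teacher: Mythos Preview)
Your Stinespring/variance route is a genuinely different approach from the paper's. The paper never introduces a purified state or a variance: it applies Theorem~\ref{theo:multiparFujiImaiGen} directly to the $N$-fold channel, expands the product Kraus operators $K_{\vec{k}_x}=K_{k_N}\cdots K_{k_1}$ via the Leibniz rule, and separates the resulting double sum over derivative positions $i,j$ into diagonal and off-diagonal parts, bounding everything in operator norm. The two technical devices are (a) a dedicated norm lemma, $\norm{\sum_x\sum_k L_k^{x\dag} A^x L_k^x}\le \max_x\norm{A^x}\,\norm{\sum_x\sum_k L_k^{x\dag} L_k^x}$, which is what converts $x$-dependent Kraus conjugation into the $\max_x\norm{\beta_x}$ prefactor, and (b) an auxiliary scalar $z>0$ inserted by rewriting $\partial\tilde K^\dag (\I A) K+\text{h.c.}$ as a complete square $(\sqrt{z}\,\partial\tilde K+\I K/\sqrt{z})^\dag A(\cdots)$ minus the terms $z\,\partial\tilde K^\dag A\,\partial\tilde K$ and $z^{-1}K^\dag A K$; the final optimization over $z$ is exactly what produces the $2\sqrt{(\sum_x\w_x)\norm{\sum_x\w_x\alpha_x}}$. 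Your proposed Cauchy--Schwarz is morally the same as this $z$-trick, so that part of the plan is sound in spirit, but the paper's completing-the-square makes the ``which factor carries $\w_x$ vs.\ which is taken worst-case in $x$'' bookkeeping mechanical rather than something to be argued.

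The concrete gap in your plan is the claim that ``since the probe and the controls are channel-independent, $\sigma^{(k-1)}$ is the same for all $x$.'' This holds only in the genuine multiparameter case where the $\npar$ channels are one $\mathcal{E}_{\parvec}$ differentiated along different directions. The theorem is stated (and later used, e.g.\ in Sec.~\ref{sec:discrimination}) for possibly distinct channels $\mathcal{E}_{x,\theta_x}$; then the state entering the $k$-th use depends on which channel acted in the first $k{-}1$ uses, and your swap $\sum_x\w_x\Tr[\sigma^{(k-1)}\alpha_x]=\Tr[\sigma^{(k-1)}\sum_x\w_x\alpha_x]$ fails, delivering at best the weaker triangle-inequality bound $\sum_x\w_x\norm{\alpha_x}$. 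The paper's route sidesteps this by never committing to a state: outer and inner Kraus layers are stripped using the trace-preserving identity $\sum_k K_k^{x\dag}K_k^x=\id$ and the norm lemma, and the $x$-dependence of the intermediate propagators is absorbed into the $\max_x$ factor rather than into an $x$-dependent state. If you want to salvage the variance picture in the general setting you will need an analogue of that lemma to control the $x$-dependent propagation, rather than assuming it away.
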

The proof is relegated to Appendix~\ref{app:proofTheoSeq}.
For $\npar=1$ one obtains the known single-parameter bound~\cite{Demkowicz-Dobrzanski2014,Sekatski2016,Zhou2020}.

The parallel strategy, corresponding to the channel $\mathcal{E}_{\theta_x}^{\otimes N}$, is less powerful than a sequential one since it can be obtained by choosing swap operations as the control unitaries.
For completeness, we provide a complete derivation of the tighter bound for the parallel strategy in Appendix~\ref{app:parallel}.
Note that the two bounds differ asymptotically only when Heisenberg scaling is allowed.
However, since we are mostly interested in noisy channels that satisfy the conditions $\beta_x=0 \; \forall \, x=1,\dots,\npar$  this distinction will not be relevant.
In this case, the optimal upper bound in~\eqref{eq:seqCEbound} is asymptotically linear in $N$:
\begin{gather}
\label{eq:SQLbound}
\CQFI_{\vec{\w}}^N \leq N \CQFIbnd_{\vec{\w}}, \quad
\CQFIbnd_{\vec{\w}} \coloneqq 4 \min_{\mathfrak{h},\beta_x = 0} \norm{ \sum_{x=1}^{\npar} \w_x \alpha_x } \;
\end{gather}
We call $\CQFIbnd_{\vec{\w}}$ the asymptotic SQL bound, since in the limit $N \gg 1$ the optimal variables $h_x$ in~\eqref{eq:seqCEbound} must make the quadratic term vanish.

Furthermore, in Appendix~\ref{app:MarkovianBound}, we provide a time-continuous variant of the bound that applies to general Markovian noise, where the duration of a single probing step may be adjusted arbitrarily.
In this case it is the total interrogation time that is treated as a resource and the bound is a direct generalization of the single-parameter bounds derived in~\cite{Demkowicz-Dobrzanski2017,Zhou2017}.

The condition $\beta_x=0$ has been dubbed ``Hamiltonian in the Kraus span'' (HKS) condition:
\begin{equation}
\begin{split}
\beta_x = 0
\iff
\I \partial_{\theta_x}\!\vec{K}_x^\dag \, \vec{K}_x \in \mathrm{span}_{\mathbb{R}}( K_{x,i}^{\dag} K_{x,j}, \forall i,j  ),
\end{split}
\end{equation}
where we have introduced the so-called Kraus spans~\cite{Zhou2020} of the channels $\mathcal{E}_{\theta_x}$.
Here $\I \partial_{\theta_x}\!\vec{K}_x^\dag \, \vec{K}_x$ is not a real Hamiltonian, but an effective generator for the parameter $\theta_x$.
In the simple case of parameter-independent noise following a single-parameter unitary we have the Kraus operators $K_j e^{-\I \theta H}$ and indeed $H = \I \partial_{\theta} \vec{K}^\dag \vec{K}$.
For an arbitrary parameter of a full-rank channel this condition is always satisfied, thus ruling out Heisenberg scaling $N^2$ for almost all (in a measure-theoretical sense) quantum channels~\cite{Fujiwara2008}.

Just like in the single-use case, the sum of single-parameter bounds is an upper bound that does not take into account probe incompatibility and the triangle inequality implies $\CQFIbnd_{\vec{\w}} \leq \sum_{x=1}^\npar \w_x \mathfrak{B}_{\theta_x}$, where $\mathfrak{B}_{\theta_x} \coloneqq 4 \min_{h_x} \norm{ \alpha_x }$ subject to $\beta_x = 0$.
Following this observation, we introduce an asymptotic probe incompatibility measure that generalizes~\eqref{eq:probe_inc_cost}.
In the SQL case, it is obtained by evaluating $\CQFIbnd_{\vec{\w}}$ with weights $\w_x = 1/\mathfrak{B}_{\theta_x}$:
\begin{gather}\mathfrak{I}_{\infty}(\mathcal{E}_{\parvec})\coloneqq \npar \Biggl( 4 \min_{\substack{ \mathfrak{h}\\ \{ \beta_x=0\} } } \norm{ \sum_{x=1}^{\npar} \frac{\alpha_x}{\mathfrak{B}_{\theta_x}}  }\Biggr)^{\!\!-1}.
\end{gather}
Since the single parameter bounds $\mathfrak{B}_{\theta_x}$ are asymptotically attainable~\cite{Zhou2020}, the quantity $\mathfrak{I}_{\infty}$ is a computable lower bound on the actual asymptotic incompatibility.

\subsection{Purification-based definition of the QFI matrix}

In the proof of Theorem~\ref{theo:multiparFujiImaiGen} we have used the established purification-based definition of the single-parameter QFI.
Since we have worked assuming that each channel $\mathcal{E}_{\theta_x}$ can be different, the purifications pertaining to different parameters need not be related.
However, in the case of multiple parameters and a single channel, the need to choose a different purification for each parameter would entail that the bound is not tight. In other words, the upper bound~\eqref{eq:singleuseCEbound} would not necessarily correspond to the weighted trace of the QFI matrix of the optimal probe state on which the channel acted upon.

In this section we show that this is not the case and that the purification-based definition of the scalar QFI is easily generalized to the matrix valued case\footnote{A similar statement appeared in~\cite{Yue2014} without explicit proof.}, meaning that each choice of the matrices $\mathfrak{h}$ corresponds to a unique purification.

For notational convenience we introduce the $d{\times}\npar$ Jacobian matrix of a $d$-dimensional pure state $\ket{\Psi_{\parvec}}$: $\boldsymbol{\nabla} \Psi_{\parvec} = [  \ket{ \partial_1 \Psi_{\parvec}} \dots \ket{\partial_\npar  \Psi_{\parvec}} ]$ so that we can write the $\npar{\times}\npar$ QFI matrix compactly as
\begin{equation}
\QFI( \ket{\Psi_{\parvec}}) = 4  \Re\left[(\boldsymbol{\nabla} \Psi_{\parvec})^\dag \boldsymbol{\nabla} \Psi_{\parvec} - (\boldsymbol{\nabla} \Psi_{\parvec})^\dag | \Psi_{\parvec} \rangle \langle \Psi_{\parvec} | \boldsymbol{\nabla} \Psi_{\parvec} \right],
\end{equation}
and clearly $\QFI( \ket{\Psi_{\parvec}}) \leq 4 \Re\left[(\boldsymbol{\nabla} \Psi_{\parvec})^\dag \boldsymbol{\nabla} \Psi_{\parvec} \right]$ since the term subtracted is a positive-semidefinite matrix.
We can thus formulate the result as follows.
\begin{theorem}
\label{th:purifdefQFIM}
The QFI matrix of a mixed state $\rho_{\parvec}$ is equal to the minimal (in the positive-semidefinite sense) QFI matrix of its purifications
\begin{equation}
\label{eq:purifdefQFIM}
\QFI( \rho_{\parvec} ) = \min_{ \Psi_{\parvec} } \QFI(\ket{\Psi_{\parvec}}) = 4 \min_{\Psi_{\parvec} } \Re\left[(\boldsymbol{\nabla} \Psi_{\parvec})^\dag \boldsymbol{\nabla} \Psi_{\parvec} \right].
\end{equation}
\end{theorem}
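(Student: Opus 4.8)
The plan is to prove both equalities in~\eqref{eq:purifdefQFIM} by a sandwich argument: first establish the lower bound $\QFI(\rho_{\parvec})\preceq\QFI(\ket{\Psi_{\parvec}})$ for \emph{every} purification, and then exhibit one purification that saturates it. For the lower bound, the point I would use is that $\rho_{\parvec}=\Tr_A\ket{\Psi_{\parvec}}\!\bra{\Psi_{\parvec}}$ is obtained from the pure state by the \emph{parameter-independent} partial-trace channel, so it suffices to invoke data processing for the scalar QFI: for any fixed direction $\vec v\in\R^\npar$, the number $\vec v^{\T}\QFI(\rho_{\parvec})\vec v$ is exactly the scalar QFI of the one-parameter subfamily $t\mapsto\rho_{\parvec^*+t\vec v}$ (whose SLD is $\sum_i v_i L_i$, whence $\vec v^{\T}\QFI(\rho_{\parvec})\vec v=\sum_{ij}v_iv_j\Re\Tr[\rho_{\parvec}L_iL_j]$), and this quantity is monotone under CPTP maps, hence bounded by the corresponding pure-state quantity $\vec v^{\T}\QFI(\ket{\Psi_{\parvec}})\vec v$; running this over all $\vec v$ gives the matrix inequality. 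Together with $\QFI(\ket{\Psi_{\parvec}})\preceq 4\Re[(\boldsymbol{\nabla}\Psi_{\parvec})^\dag\boldsymbol{\nabla}\Psi_{\parvec}]$, already noted above the theorem, this yields ``$\preceq$'' in both equalities.

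For attainability I would construct an explicit minimizer. Starting from the spectral purification $\ket{\Psi^0_{\parvec}}=\sum_k\sqrt{p_k(\parvec)}\,\ket{k(\parvec)}\otimes\ket{k_0}$ (with a fixed ancilla basis $\ket{k_0}$ on a sufficiently large ancilla), every purification can be written as $\ket{\Psi_{\parvec}}=(\id\otimes W_{\parvec})\ket{\Psi^0_{\parvec}}$ with $W_{\parvec}$ a unitary family on the ancilla; taking $W_{\parvec}=\exp(\sum_i G_i(\theta_i-\theta_i^*))$ realizes, at the working point $\parvec^*$, $W_{\parvec^*}=\id$ and $\partial_i W_{\parvec}|_{\parvec^*}=G_i$ for an \emph{arbitrary, independently chosen} collection of anti-Hermitian operators $\{G_i\}$. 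The crucial step is to choose each $G_i$ so that $\partial_i\ket{\Psi_{\parvec}}|_{\parvec^*}=\tfrac12(L_i\otimes\id)\ket{\Psi_{\parvec^*}}$, i.e. $(\id\otimes G_i)\ket{\Psi^0_{\parvec^*}}=\tfrac12(L_i\otimes\id)\ket{\Psi^0_{\parvec^*}}-\partial_i\ket{\Psi^0_{\parvec}}|_{\parvec^*}$; the existence of such an anti-Hermitian $G_i$ is precisely the single-parameter argument of~\cite{Fujiwara2008,Escher2011} applied to $\theta_i$ alone, and it relies only on the defining relation $\partial_i\rho_{\parvec}=\tfrac12(L_i\rho_{\parvec}+\rho_{\parvec}L_i)$, which forces the required consistency of the matrix elements of $G_i$ (with the usual caveat of locally constant rank, the general case following by continuity). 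Since the condition for index $i$ constrains $G_i$ only, all $\npar$ conditions can be imposed simultaneously by the single family $W_{\parvec}$ above. For the resulting $\ket{\Psi^\star_{\parvec}}$ one then gets $\langle\Psi^\star_{\parvec^*}|\partial_i\Psi^\star_{\parvec^*}\rangle=\tfrac12\Tr[\rho_{\parvec^*}L_i]=\tfrac12\partial_i\Tr\rho_{\parvec}=0$, so that $\QFI(\ket{\Psi^\star_{\parvec}})=4\Re[(\boldsymbol{\nabla}\Psi^\star_{\parvec})^\dag\boldsymbol{\nabla}\Psi^\star_{\parvec}]$, and moreover $[\QFI(\ket{\Psi^\star_{\parvec}})]_{ij}=4\Re\langle\partial_i\Psi^\star_{\parvec^*}|\partial_j\Psi^\star_{\parvec^*}\rangle=\Re\Tr[\rho_{\parvec^*}L_iL_j]=[\QFI(\rho_{\parvec})]_{ij}$.

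Combining the two steps gives $\QFI(\rho_{\parvec})=\QFI(\ket{\Psi^\star_{\parvec}})\preceq\QFI(\ket{\Psi_{\parvec}})\preceq 4\Re[(\boldsymbol{\nabla}\Psi_{\parvec})^\dag\boldsymbol{\nabla}\Psi_{\parvec}]$ for every purification, with equality throughout at $\ket{\Psi^\star_{\parvec}}$; in particular $\ket{\Psi^\star_{\parvec}}$ is a genuine minimizer in the positive-semidefinite order of both $\QFI(\ket{\Psi_{\parvec}})$ and $4\Re[(\boldsymbol{\nabla}\Psi_{\parvec})^\dag\boldsymbol{\nabla}\Psi_{\parvec}]$, so both minima in~\eqref{eq:purifdefQFIM} are attained and equal $\QFI(\rho_{\parvec})$. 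I expect the main obstacle to be the attainability step --- concretely, checking that the single-parameter optimal-purification condition is genuinely a pointwise condition on the ancilla ``velocity'' $G_i$ alone (so that the $\npar$ conditions decouple and are therefore simultaneously satisfiable), and that the induced linear equation for $G_i$ always admits an anti-Hermitian solution on the chosen ancilla; both points reduce, parameter by parameter, to the known single-parameter result, but the bookkeeping around rank-deficient $\rho_{\parvec}$ and the ancilla dimension requires care.
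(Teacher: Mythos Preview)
Your proposal is correct and follows essentially the same route as the paper: both arguments establish the lower bound via monotonicity of the QFI under the partial-trace CPTP map (you derive the matrix inequality from scalar data processing along every direction $\vec v$, while the paper invokes matrix monotonicity directly~\cite{Petz1996a,Liu2019d}), and both obtain attainability by noting that the ancilla ``velocities'' $G_i=-\I h_i$ at the working point are independent variables, so the single-parameter optimal-purification condition $\ket{\partial_i\Psi^\star}=\tfrac12(L_i\otimes\id)\ket{\Psi^\star}$ can be imposed separately for each $i$. The only cosmetic difference is that you start from the spectral purification while the paper starts from an arbitrary fixed one.
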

\begin{proof}
Starting from an arbitrary fixed purification $\ket{\Psi_{\parvec}}$, such that $\rho_{\parvec}=\Tr_E | \Psi_{\parvec} \rangle \langle \Psi_{\parvec} |$, all other purifications of the quantum statistical model are obtained by acting with a parameter-dependent unitary on the environment $\ket{\tilde{\Psi}_{\parvec}}=\id \otimes u_{\parvec} \ket{\Psi_{\parvec}}$.
If we consider the matrix $(\boldsymbol{\nabla} \tilde{\Psi}_{\parvec})^\dag \boldsymbol{\nabla} \tilde{\Psi}_{\parvec} $ we see that the unitary $u_{\parvec}$ only enters through the quantities $u_{\parvec}^\dag \partial_{j} u_{\parvec} = - \I h_j$, where $h_j$ are Hermitian matrices.
Crucially, the matrices $h_j$ are independent variables\footnote{
To see explicitly that this is possible, we can choose without loss of generality $u_{\parvec}=e^{-\I \sum_{j=1}^{\npar} (\theta_j - \theta^*_j) h_j}$, where $\parvec^*$ is the true value of the parameters at which all derivatives and functions are evaluated and indeed $u_{\parvec}^\dag \partial_{j} u_{\parvec} \big|_{\parvec =\parvec^*}  = - \I h_j$}.
From the purification-based definition of the single-parameter QFI~\cite{Fujiwara2008,Koodynski2013}
$\QFI_{jj}(\rho_{\parvec})=\min_{h_j} \braket{ \partial_j \Psi_{\parvec} | \partial_j \Psi_{\parvec} } = \QFI_{jj}( \ket{\Psi^{*}_{\parvec}} )$,
we know that for each parameter there exists an optimal matrix $h_j$ such that the pure state model $ \ket{\Psi^{*}_{\parvec}} $ satisfies $\ket{\partial_j \Psi^*_{\parvec}} = \frac{1}{2} L_j \otimes \id \ket{\Psi^*_{\parvec}}$, where $L_j$ is the SLD of the original mixed state~\cite{Fujiwara2008,Kolodynski2014}, also implying $\braket{\partial_j \Psi^*_{\parvec}| \Psi^*_{\parvec}} = 0$.
Therefore for this purification we obtain the equality
\begin{equation}
\label{eq:complexQFIequalityPurif}
\Tr [ \rho_{\parvec} L_i L_j ] = 4 \braket{ \partial_i \Psi^*_{\parvec} | \partial_j \Psi^*_{\parvec} } \implies \QFI(\rho_{\parvec})=\QFI(\ket{\tilde{\Psi}_{\parvec}}),
\end{equation}
where the QFI matrix is the real part of the complex matrix on the left.
Finally, thanks to the monotonicity of the QFI matrix~\cite{Petz1996a,Liu2019d}, we have the matrix inequalities
\begin{equation}
\label{eq:purifIneqQFImat}
\QFI(\rho_{\parvec}) \leq \QFI(\ket{\tilde{\Psi}_{\parvec}}) \leq 4 \Re \left[ (\boldsymbol{\nabla} \tilde{\Psi}_{\parvec})^\dag \boldsymbol{\nabla} \tilde{\Psi}_{\parvec} \right],
\end{equation}
which hold for arbitrary purifications $\ket{\tilde{\Psi}_{\parvec}}$ since partial tracing is a CPTP map.
This shows that the matrix-valued minimization in the statement of the theorem is well posed and closes the proof.
\end{proof}
We remark that when $ \Im \left( \Tr [ \rho_{\parvec} L_i L_j ] \right) = 0 \;\; \forall \, i,j$ the bound $\Tr [ W \QFI(\rho_{\parvec})^{-1} ] $ is equal to the Holevo Cramér-Rao bound and thus asymptotically attainable~\cite{Ragy2016,Suzuki2018}.
Therefore, thanks to the equality between the complex matrices~\eqref{eq:complexQFIequalityPurif}, this condition can be checked from the optimal purification.

By solving the minimization~\eqref{eq:singleuseCEbound} one obtains the maximal value of trace of the QFI matrix, but the optimal state is not identified and it is not immediate to retrieve the full QFI matrix.
In Appendix~\ref{app:optistate} we present an algorithm that finds an optimal state by enforcing that a solution of the minimax problem in~\eqref{eq:maxmin} must be a saddle point in the variables $\rho$ and $\mathfrak{h}$, analogously to the single-parameter case~\cite{Zhou2020}.
The same approach can be applied to the asymptotic SQL bound~\eqref{eq:SQLbound}, but the meaning of the obtained $\rho$ is unclear, unlike for single-parameter problems where it is connected to the approximate error strategy to attain the bound~\cite{Zhou2020}.
Once an optimal state is found, it is easy to obtain its full QFI matrix using the purification-based definition.

\section{Applications to multiparameter quantum metrology}
\label{sec:applications}

In this section we apply the theory we have developed to physical problems.
First, we study the paradigmatic problem of characterizing the Hamiltonian of a $d$-dimensional quantum system, i.e. Hamiltonian tomography, in which the number of parameters scales with the dimension of the Hilbert space.
We consider the effect of erasure noise and we show that asymptotic probe incompatibility is identical to the noiseless case.
Focusing on the submodel with only commuting generators the problem is equivalent to multi-phase estimation in presence of optical losses, a practical application for which we confirm and strengthen existing results.
Then we consider the problem of estimating a phase and a noise parameter, for photon loss and phase diffusion.
In these instances we see no asymptotic probe incompatibility, matching physical intuition and previous indications.

An additional example, the estimation of the two noise parameters of a generalized amplitude damping channel, is studied in Appendix~\ref{app:GADchannel}.
For this channel there is no advantage in using advanced strategies, since the asymptotic SQL bound coincides with the single-use one, analogously to several single-parameter qubit noise channels~\cite{Koodynski2013}.

\subsection{Hamiltonian tomography with erasure noise}

Before focusing on our specific model, we start with a more general consideration.
The vector $\parvec$ parametrizes the unitary $U_{\parvec}=e^{-\I \sum_{x}^\npar \theta_x G_x }$, where $G_x$ are Hermitian generators.
We model the noise with a channel that acts \emph{after} the unitary encoding, so that the Kraus operators are $K'_i = K_i U_{\parvec}$, where $K_i$ are the Kraus operator of the noise only, and their derivatives are $\partial_x K'_i = -\I K'_i G_x$, since we assume to work at the true value $\theta_x = 0 \, \forall x$.
Using the HKS conditions, the SQL bound~\eqref{eq:SQLbound} can be simplified under these assumptions:
\begin{equation}
\label{eq:SQLnoiseafter}
\CQFIbnd_{\vec{\w}} = 4 \min_{ \substack{ h_x \\ \beta^x = 0} }  \norm{ - \left( \sum_{x=1}^\npar \w_x G_x^2 \right) + \vec{K}^\dag \left(\sum_{x=1}^\npar \w_x h_x^2 \right) \vec{K}}.
\end{equation}

Hamiltonian tomography of a $d$-dimensional quantum system amounts to the estimation of the $d^2-1$ parameters of a $\mathrm{SU}(d)$ transformation.
This problem, or slight variations, has been studied in the noiseless and error-corrected scenario~\cite{Ballester2004,Kahn2007,Imai2007,Yuan2016b,Kura2017,Gorecki2020}.
It has been shown that an advantage of $O(\sqrt{\npar})=O(d)$ is possible using a simultaneous adaptive estimation strategy.

Here, for simplicity we consider the $\npar=d^2$ parameters of a $\mathrm{U}(d)$ transformation.
This choice makes the QFI matrix singular, but it is not a problem when using the total QFI as a figure of merit.
Concretely, we choose the following generators
\begin{equation}
\label{eq:HamGenerators}
\begin{split}
G_{j}^\t{diag} = | j \rangle \langle j | \qquad
& G_{j k}^{\t{re}} = \frac{1}{2}\left( | j \rangle \langle k | + | k \rangle \langle j | \right) \\
&G_{j k}^{\t{im}} = \frac{\I}{2} \left( | j \rangle \langle k | - | k \rangle \langle j | \right),
\end{split}
\end{equation}
where we have separated the three submodels with $d$ diagonal, $d(d-1)/2$ real and $d(d-1)/2$ imaginary off-diagonal generators.
For the noise we consider a qudit erasure channel, described by the following $d+1$ Kraus operators
\begin{equation}
\label{eq:ErasureQuditKraus}
K_0 = \sqrt{\eta} \begin{bmatrix}
 \id_{d} \\
0  \dots 0
\end{bmatrix} \quad K_i = \sqrt{1-\eta}
| d + 1 \rangle \langle i | 
\end{equation}
for $i=1,\dots,d$; the output Hilbert space contains an additional dimension that represents the system in the ``lost'' state $\ket{d+1}$.

The asymptotic SQL bound on the total QFI is the following (more details in Appendix~\ref{app:HamTomErause})
\begin{gather}
\CQFIbnd = \frac{\eta}{1-\eta}\left( \CQFI_{\mathrm{diag}} + \CQFI_{\mathrm{real}} +  \CQFI_{\mathrm{imag}} \right) \label{eq:HamEstBnd} \\
\CQFI_{\mathrm{diag}} = \frac{4(d-1)}{d},  \quad
 \CQFI_{\mathrm{real}} = \CQFI_{\mathrm{imag}} = d-1,
\end{gather}
where the weights are $\w_i=1$ (in this case we will drop the weight index) and $\CQFI$ are the \emph{noiseless} bounds for a single use of the channel for the three submodels (attained by considering the extended channel and a probe state maximally entangled with the auxiliary system~\cite{Yuan2016b}).
This means that probe incompatibility exists only inside the three submodels.

The single-parameter bounds are identical for all the parameters: $\mathfrak{B}_{x} = \frac{\eta}{1-\eta}$, equivalent to the estimation of a qubit rotation with erasure noise~\cite{Escher2011, Demkowicz-Dobrzanski2012,Demkowicz-Dobrzanski2014}.
This means that the asymptotic probe incompatibility cost is simply obtained by rescaling the bound~\eqref{eq:HamEstBnd} and it is identical to the noiseless case $\mathfrak{I}_{\infty} = \frac{d^3}{2(d^2 +d -2)} = O(\sqrt{\npar})$, which is an intermediate scaling between the two
extremal scalings $O(1)$ (compatibility) and $O(\npar)$ (maximal incompatibility).

\subsection{Lossy multiple-phase estimation}
\label{subsec:lossymultiphase}

Estimating multiple optical phases simulatenously is a paradigmatic task in multiparameter quantum metrology~\cite{Macchiavello2003,Ballester2004a,Humphreys2013,Gagatsos2016a,Zhang2017a,You2017a,Gessner2018,Li2019a,Goldberg2020,Markiewicz2020}, also realized experimentally~\cite{Polino2019,Valeri2020}.
In the noiseless scenario, Humphreys et al.~\cite{Humphreys2013} argued that a simultaneous estimation strategy provides an advantage in the total variance over individual quantum estimation schemes that scales as $O(\npar)$.
However, the apparent advantage is actually a pitfall in the application of the QCRB. A minimax analysis taking into account the \emph{total} number of photons, shows that no advantage in scaling with $\npar$ is present~\cite{Gorecki2021}.
Fortunately, asymptotic discrepancies between the minimax, Bayesian and QCRB  predictions may only appear in the Heisenberg scaling scenarios and disappear when the optimal scaling corresponds to the SQL and this issue will not have any impact on our results.

Here we study the problem in the presence of optical losses that forbid Heisenberg scaling.
This setting was first studied in~\cite{Yue2014} with a purification-based bound and the problem was also studied treating the transmissivity as a nuisance parameter~\cite{Yang2018a}.
These works considered indistinguishable photons and thus a parallel strategy.
Here we work in a more general scenario, treating the photons as distinguishable particles on which adaptive strategies can be applied, e.g. thanks to an additional degree of freedom.

Mathematically, this problem is a commuting submodel of the full $\mathrm{U}(d)$ model we have introduced before.
We consider only $\npar=d-1$ parameters corresponding to all the diagonal generators in Eq.~\eqref{eq:HamGenerators} except one, representing the reference arm of the interferometer with a known phase.
Since the generators commute, the bound~\eqref{eq:SQLnoiseafter} now holds for all values of the parameters.
The Kraus operators~\eqref{eq:ErasureQuditKraus} describe photon losses happening with equal probability $1-\eta$ in each mode.

We obtain the following bound on the total variance
\begin{equation}
\label{eq:multiphbound}
\Delta^2 \tilde{\parvec} \geq \frac{1-\eta}{\eta} \frac{\npar^3}{4 N (\npar-1)} \geq  \frac{1-\eta}{4 \eta} \frac{\npar^2}{N},
\end{equation}
valid for $\npar > 1$ ($d>2$), since the total QFI is identical to the bound~\eqref{eq:HamEstBnd} on diagonal parameters, with the substitution $d \to d-1$, for $d>2$.
As explained more in details in Appendix~\ref{app:lossmultiphaseDeriv}, when the same $\npar$ diagonal generators act on higher-dimensional systems the optimal total QFI is unchanged.
This means that adding additional reference modes does not improve the bound.
The rightmost quantity in~\eqref{eq:multiphbound} corresponds to the bound obtained in~\cite{Yue2014} for simultaneous estimation of the $\npar$ phase with $N$ total indistinguishable photons, in the limit $N \gg 1, \npar \gg 1 , N / \npar \gg 1$.
Our result shows that the same bound holds even for the most general adaptive strategy acting on distinguishable photons.
The asymptotically attainable total variance for a separate estimation strategy with $N$ total photons is $\frac{1-\eta}{\eta} \frac{\npar^2}{N}$, therefore simultaneous estimation could potentially reduce the total error of a factor $4$.
The same factor $4$ advantage is observed in the regime $\npar \gg 1$ when comparing separate and simultaneous estimation strategies with classical light~\cite{Goldberg2020}, which is however not optimal in presence of lossess.
Practical schemes to attain the bound~\eqref{eq:multiphbound} asymptotically (both in $N$ and $\npar$) are currently not known.

It should already be clear that this sub-model shows stronger probe incompatibility (in terms of the number of parameters involved) than the full model.
Indeed, the asymptotic probe incompatibility cost is almost maximal $\mathfrak{I}_{\infty} = \frac{\npar^2}{4(\npar-1)} = O(\npar)$.
We stress again that it is equal to the probe incompatibility cost of the noiseless problem, since the bound on the total QFI is simply a rescaling of the optimal trace of the noiseless QFI matrix.

\subsection{Phase and loss}

Here we consider the problem of simultaneously estimating the phase shift and the loss (absorption) induced by a sample in one arm of a two-arm interferometer (the case of symmetric loss in both arms is simpler and fully solved in~\cite{Ragy2016}).
This problem is a paradigmatic example of a trade-off between the errors on the two parameters~\cite{Crowley2014}.
On the one hand, this problem shows measurement incompatibility, making an analysis in terms of the QFI not complete; for particular probe states and low photon numbers, the fundamental Holevo Cramér-Rao bound is studied in~\cite{Albarelli2019,Conlon2020}.
On the other hand, the problem shows also probe incompatibility: single-mode Fock states are optimal for loss estimation~\cite{Adesso2009}, but phase-insensitive.

Again, we model this metrological problem using a particle description of photons, so that we can describe a single photon as a qubit~\cite{Demkowicz-Dobrzanski2015a}.
The channel experienced by a single photon is thus described by the following Kraus operators
\begin{equation}
K_0 = \begin{bmatrix}
\sqrt{\eta} e^{-\I \varphi } & 0 \\
0 &  1 \\
0 & 0
\end{bmatrix}
  \qquad  K_1 = \begin{bmatrix}
0 & 0 \\
0 &  0 \\
\sqrt{1-\eta} & 0
\end{bmatrix},
\end{equation}
where the extra output dimension accounts for lost photons and we want to estimate the phase $\varphi$ and the transmissivity $\eta$.

As intuitively expected, with a single use of the channel there is indeed probe incompatibility and we obtain
\begin{equation}
	\label{eq:incompPhLoss}
\mathfrak{I} = 2\Biggl(\frac{1-\eta}{\eta +\sqrt{\eta }-\sqrt{2 (1+\sqrt{\eta })}}\Biggr)^{\!\!2},
\end{equation}
a strictly increasing function of $\eta$ in the range $ 0 \leq \eta \leq 1 $ that spans the values $1 \leq \mathfrak{I} \leq \frac{32}{25}$.
Alternatively, probe incompatibility can be observed from the the fact that the total channel QFI with $\w_i=1$ is equal to the optimal QFI about the transmissivity $\CQFI=\CQFI_{\eta} = \frac{1}{\eta(1-\eta)}$.

On the contrary, the asymptotic bound indicates that there is no incompatibility as we have $\mathfrak{I}_{\infty} = 1$ and $\CQFIbnd = \mathfrak{B}_{\varphi} + \mathfrak{B}_{\eta}$.
More details on the calculations are found in Appendix~\ref{app:phaseloss}.

\subsection{Phase and dephasing}
Simultaneous estimation of an optical phase and its phase-diffusion coefficient is another well-known two-parameter problem~\cite{Vidrighin2014,Knysh2013,Szczykulska2017}.
At the single photon level the evolution is described by the following Kraus operators
\begin{equation}
K_0 = \sqrt{\frac{1+\eta}{2}} \begin{bmatrix}
e^{\I \varphi } & 0 \\
0 &  1
\end{bmatrix}
  \;\;\;  K_1 = \sqrt{\frac{1-\eta}{2}} \begin{bmatrix}
e^{\I \varphi } & 0 \\
0 &  -1
\end{bmatrix}.
\end{equation}

It is known that there is no probe incompatibility for $N=1$, while it reappears for $N\geq 2$ and then again decreases with $N$~\cite{Ragy2016} and vanishes asymptotically~\cite{Knysh2013}.
Our bounds \eqref{eq:SQLbound} and \eqref{eq:singleuseCEbound} show that both with a single use of the channel and asymptotically there is no probe incompatibility: $\mathfrak{I}= \mathfrak{I}_{\infty} = 1$.
Details on this calculation are presented in Appendix~\ref{app:phasedephasing}.
This asymptotic disappearance of incompatibility agrees with the results of a direct QFI maximization in parallel strategies, reported in~\cite{Ragy2016,Knysh2013}.

\section{Application to quantum channel discrimination}
\label{sec:discrimination}
In this section we draw a connection between bounds on the total QFI introduced in Sec.~\ref{sec:multiQFIsum} and the problem of discriminating between several channels.
First, we generalize the framework to arbitrary quantum channel discrimination tasks.
This approach is particularly suited to those problem where there is some reference channel to which the channels are naturally related.
In this framework we derive an inequality that we call a speed limit for the discrimination of multiple noisy channels.
As an application, we derive bounds on the performance of a time-continuous version of Grover's algorithm in presence of noise, revisiting the approach of~\cite{Demkowicz-Dobrzanski2015}.
Our new bounds allow us to close a gap in the proof that was left open as a conjecture.

\subsection{Background notions}

\subsubsection{Probability of error}
The error in discriminating among $\npar$ states $\rho_n$ with prior probability $p_x$ is given by~\cite{Watrous2017}
\begin{equation}
\label{eq:generic_error}
P_{\mathrm{H}} \left( \{ \rho_x , p_x \}  \right) \coloneqq 1 -  \max_{\sum_x \Pi_x = \id} \sum_{x=1}^\npar p_x \Tr ( \rho_x \Pi_x ),
\end{equation}
where H stands for ``Helstrom error'' and $\{ \Pi_x \}$ is a $\npar$-outcome POVM.
For $\npar=2$ and equal priors $p_1=p_2 = 1/2$ it reduces to
\begin{equation}
\label{eq:binary_error}
P_H \left(\rho_1,\rho_2\right) = \frac{1}{2} \left[ 1 - D_{\mathrm{tr}}(\rho_1,\rho_2) \right],
\end{equation}
where
\begin{equation}
D_{\mathrm{tr}}(\rho_1,\rho_2) \coloneqq \frac{1}{2}\norm{ \rho_1 - \rho_2 }_1
\end{equation}
is the trace distance and $\norm{ A }_1 = \Tr \sqrt{ A^\dag A } $ is the trace norm, i.e. the sum of the singular values of $A$.
It is useful to introduce the Fuchs-van de Graaf inequalities between the trace distance and the fidelity~\cite{Fuchs1999}
\begin{equation}
\label{eq:FuchsVDGineq}
1 - F(\rho_1,\rho_2) \leq D_{\mathrm{tr}}(\rho_1,\rho_2) \leq \sqrt{1 - F(\rho_1,\rho_2)^2},
\end{equation}
where we define the fidelity as
\begin{equation}
F(\rho_1,\rho_2) \coloneqq \norm{ \sqrt{\rho_1} \sqrt{\rho_2} }_1 = \Tr \sqrt{\sqrt{\rho_1} \rho_2 \sqrt{\rho_1}}.
\end{equation}

There is no closed form solution for the generic multi-hypothesis problem, but it is possible to find upper and lower bounds.
A lower bounds in terms of binary discrimination is the following~\cite{Qiu2008}
\begin{equation}
\label{eq:multiboundfrombinary}
P_H \left( \{ \rho_x , p_x \}  \right) \geq \frac{1}{2}\left( 1 - \frac{1}{\npar-1} \sum_{1 \leq x < y \leq \npar}  \norm{ p_x \rho_x - p_y \rho_y }_1   \right).
\end{equation}

\subsubsection{Bures angle}
The angular Bures distance or Bures angle is defined in terms of the fidelity~\cite{bengtsson2017geometry}
\begin{equation}
\label{eq:BuresAngle}
D_\mathrm{A}(\rho_1,\rho_2) \coloneqq \arccos F(\rho_1,\rho_2)
\end{equation}
and in particular the infinitesimal version is related to the QFI
\begin{equation}
\label{eq:infinitAngle}
D_\mathrm{A}(\rho_{\theta},\rho_{\theta+ \mathrm{d} \theta}) = \frac{1}{2} \sqrt{\QFI(\rho_\theta) } \mathrm{d} \theta,
\end{equation}
where we have introduced a suitable smooth parametrization.
Crucially, $D_\mathrm{A}(\rho_1,\rho_2)$ is the length of a geodesic path between $\rho_1$ and $\rho_2$ with respect to this infinitesimal metric~\cite{Uhlmann1992,Taddei2013}, thus for any smooth parametrization such that $\rho_1=\rho_{\theta=0}$ and $\rho_2=\rho_{\theta=\theta^*}$ we have
\begin{equation}
\label{eq:geodesicIneq}
D_\mathrm{A}(\rho_1,\rho_2) \leq  \frac{1}{2} \int_{0}^{\theta^*} \sqrt{\QFI(\rho_\theta) } \mathrm{d} \theta.
\end{equation}

We use the second inequality~\eqref{eq:FuchsVDGineq} to upper bound the trace distance with the Bures angle
\begin{equation}
\label{eq:boundDtrDA}
D_{\mathrm{tr}}(\rho_1,\rho_2)  \leq \sqrt{1 - \cos^2[D_{\mathrm{A}}(\rho_1,\rho_2)] }\leq  D_{\mathrm{A}}(\rho_1,\rho_2),
\end{equation}
where we have used the inequality $\cos^2 \xi \geq 1-\xi^2 $ for $0\leq \xi \leq \pi / 2$.

\subsection{Reference-state bound for state discrimination}

Thanks to inequality~\eqref{eq:boundDtrDA} and the triangle inequality the trace distance can be upper bounded as
\begin{equation}
D_{\mathrm{tr}}(\rho_x,\rho_y) \leq D_{\mathrm{A}}(\rho_x,\rho_0) + D_{\mathrm{A}}(\rho_y,\rho_0),
\end{equation}
where we have introduce an arbitrary reference state $\rho_0$.
Now we make the crucial assumption that the states $\rho_x$ and $\rho_y$ are parametrized by a parameter $\theta$ such that $\rho_x = \rho_{x,\theta = \theta^*}$, $\rho_y = \rho_{y,\theta= \theta^*}$ and $\rho_0 = \rho_{x,\theta = 0} = \rho_{y,\theta = 0} $, thanks to~\eqref{eq:geodesicIneq} we have
\begin{equation}
\label{eq:DtrIntFIneq}
D_{\mathrm{tr}}(\rho_x,\rho_y) \leq \frac{1}{2} \left[ \int_{0}^{\theta^*} \! \mathrm{d} \theta  \, \sqrt{\QFI(\rho_{x,\theta})} + \sqrt{\QFI(\rho_{y, \theta})}  \right].
\end{equation}
The intuition of using the geodesic and the triangle inequalities is pictorially represented in Fig.~\ref{fig:schemes}-(c).

Now we focus on discriminating $\npar$ states with uniform prior $\w_x=\frac{1}{\npar}$ and apply~\eqref{eq:multiboundfrombinary} to find a lower bound on the error from pairwise discrimination.
Thanks to~\eqref{eq:DtrIntFIneq} the sum appearing in~\eqref{eq:multiboundfrombinary} is upper bounded as follows
\begin{multline}
\label{eq:sumDtrIneqsumQFI}
\sum_{ 1 \leq x < y \leq \npar } D_{\mathrm{tr}}(\rho_x,\rho_y)  \leq \frac{(\npar-1)}{2} \int_{0}^{\theta^*} \! \mathrm{d} \theta \sum_{ x=1 }^\npar \sqrt{\QFI(\rho_{x,\theta})} \\
\leq \frac{(\npar-1)\sqrt{\npar}}{2} \int_{0}^{\theta^*} \! \mathrm{d} \theta \sqrt{\sum_{ x =1 }^\npar \QFI(\rho_{x,\theta})},
\end{multline}
where first we have used the identity $ \sum_{1 \leq x < y \leq \npar} (\w_x + \w_y) =  (\npar-1) \sum_x \w_x $ and then Jensen inequality $\sum_{i=1}^\npar \sqrt{\w_i} \leq \sqrt{\npar} \sqrt{ \sum_{i=1}^\npar \w_i}$.
Using~\eqref{eq:sumDtrIneqsumQFI} in conjunction with~\eqref{eq:multiboundfrombinary} we obtain
\begin{equation}
\label{eq:boundsumQFIepsilon}
\int_{0}^{\theta^*} \! \mathrm{d} \theta \sqrt{ \frac{1}{\npar} \sum_{ x =1 }^\npar \QFI(\rho_{x,\theta})} \geq 1-2 P_H \left( \{ \rho_x , \npar^{-1} \}  \right).
\end{equation}

\subsection{Speed limits for discriminating noisy quantum channels}

The derivations of the previous section have limited utility for discriminating an ensemble of quantum states.
In general, for a fixed set of states the sum $\sum_{x=1}^\npar \QFI( \rho_x ) = O(\npar)$, since there are no issues with probe incompatibility.
Thanks to the additivity of the QFI we thus get a trivial bound (constant in the number $\npar$ of states) on the number of copies $N$ needed to discriminate with fixed accuracy $N \geq O(1) $.
This is a weak and non informative bound, since it is known that the dependence on $\npar$ should be logarithmic~\cite{Harrow2012a}.
However, such bounds reveal their usefulness when applied to channel discrimination, where we can take advantage of general and powerful upper bounds on the total QFI that we have introduced.
In particular, we focus on the case when the HKS conditions are satisfied for all channels, which is always true for full-rank channels.

We consider the discrimination of an ensemble of channels $\{ \mathcal{E}_x \}_{x=1}^{\npar}$ with uniform prior, assuming $N$ repeated uses and a general adaptive strategy with auxiliary systems.
We fix a target error $\varepsilon$ and we find a lower bound on the number $N$ of uses of the channel (i.e. channel queries) needed to go below the error threshold.
The problem amounts to identification of the minimal value of $N$ for which
\begin{equation}
\label{eq:targetmaxPH}
\min_{\rho_0,\{ \mathcal{V}_i \} } P_H \left( \left\{ \mathcal{E}^{N}_x(\rho_0), \npar^{-1} \right\} \right) \leq \varepsilon,
\end{equation}
where $\mathcal{E}^{N}_x(\rho_0)$ is the overall channel of the adaptive strategy, as defined in Sec.~\ref{sec:multiQFIsum}.

Instead of a reference state we now introduce a reference channel $\mathcal{E}_0$ and a suitable parametrization such that $\mathcal{E}^{x} = \mathcal{E}^{x}_{\theta = \theta^*}$ and $\mathcal{E}_0 = \mathcal{E}_{x,\theta = 0}$, as in Fig.~\ref{fig:schemes}-(c).
In this way we can apply~\eqref{eq:boundsumQFIepsilon} and obtain
\begin{equation}
\int_{0}^{\theta^*} \! \mathrm{d} \theta \sqrt{ \frac{1}{\npar} \min_{\rho_0,\{ \mathcal{V}_i \} } \sum_{ x =1 }^\npar \QFI \left( \mathcal{E}^{N}_{x,\theta}(\rho_0) \right)} \geq 1-2\epsilon.
\end{equation}

Here we see that the bounds on the total QFI $\CQFI^N$ that we have derived in Sec.~\ref{sec:multiQFIsum} are useful even when considering different channels.
In particular, if we can find a parametrization such that the HKS conditions are satisfied for all the channels we obtain the following inequality
\begin{equation}
\label{eq:sequentialMbound}
N \geq \frac{ \npar (1- 2\varepsilon)^2 }{\left( \int_0^{\theta^*} \mathop{\mathrm{d} \theta} \sqrt{ \CQFIbnd(\theta)} \right)^2 },
\end{equation}
where we have highlighted that the SQL bound $\CQFIbnd(\theta)$ generally depends on the value $\theta$ at which it is evaluated.
We call this inequality a speed limit, a term which is more appropriate when taking a continuous limit, as done in Appendix~\ref{app:MarkovianBound} for Markovian noise.

We highlight that the speed limit~\eqref{eq:sequentialMbound} gives a nontrivial result in the regime where $\varepsilon$ is small, i.e. in the regime of large distinguishability.
On the contrary, for $\varepsilon = \frac{1}{2}$ the bound is completely trivial, yielding $N\geq 0$.
We also remark that in general it is not possible to perfectly discriminate between  noisy quantum channels with a finite number of uses~\cite{Duan2009} (or finite time for Markovian noise~\cite{Chen2019f}), unlike for unitary channels~\cite{Acin2001}.

If instead of fixing a target error~\eqref{eq:targetmaxPH} we require an identical target fidelity between all pairs of final states, which, in terms of the Bures angle, means
\begin{equation}
\label{eq:targetDA}
D_{\mathrm{A}} \left[ \mathcal{E}^{N}_{x}(\rho_0),  \mathcal{E}^{N}_y(\rho_0)  \right] = \delta \quad \forall x \neq y,
\end{equation}
we can derive a slightly tighter bound, similarly to~\cite{Demkowicz-Dobrzanski2015}
\begin{equation}
\label{eq:sequentialTboundBures}
N \geq \frac{ \npar \delta^2 }{\left( \int_0^{\theta^*} \mathop{\mathrm{d} \theta} \sqrt{ \CQFIbnd(\theta)} \right)^2 }.
\end{equation}

\subsection{Loss of computational speedup in presence of noise}
Following~\cite{Demkowicz-Dobrzanski2015} we consider the time-continuous version of Grover's algorithm proposed by Farhi and Gutmann~\cite{Farhi1998}.
The problem of finding a marked element $x$ of a database of size $d$ is recast as as determining which oracle Hamiltonian $G_x = |x\rangle \langle x |$ generates the evolution $U^{x}_{\omega,\tau} = e^{-\I \omega \tau G_x}$, where $\tau$ and $\omega$ are assumed to be known.
When the oracle is noiseless it is possible to find the marked element with a total runtime $O(\sqrt{d})$ instead of the classical $O(d)$, by appropriately driving the system.

It has been shown that a noisy oracle usually destroys the quantum advantage, although this behaviour was only shown for specific noise models~\cite{Regev2012,Temme2014}.
We study the same problem and we confirm the same loss of advantage for a couple of new noise models, highlighting the
generality of the methodology employed.

In the following, we focus on noisy channels that commute with the unitary evolution so that we can choose $\mathcal{E}_{x,\omega,\tau}=\mathcal{E}_{\tau} \circ \mathcal{U}_{x,\omega,\tau}$ where $\mathcal{U}(\cdot)$ denotes the map $U \cdot U^\dag$.
Moreover we focus on Markovian noise with a semigroup property $\mathcal{E}_{\tau+\tau'}=\mathcal{E}_{\tau} \circ \mathcal{E}_{\tau'}$.
Using the speed limit we have introduced, we obtain a bound on the total runtime $T$ under the most general entanglement-enhanced adaptive strategy.
Thanks to the Markovianity assumption the strongest bound is obtained in the limit $\tau\to0$, i.e. when the control operations are applied as frequently as possible.
In this case, when invoking the metrological bound, we are in fact using a direct generalization of the derived bounds adapted to the estimation of multiple Hamiltonian parameters with arbitrary Markovian noise under the most general adaptive strategy, following the same approach used in~\cite{Sekatski2016,Demkowicz-Dobrzanski2017,Zhou2017}, see Appendix~\ref{app:MarkovianBound} for details.

\subsubsection{Uniform qudit dephasing}

We consider the following uniform dephasing channel
\begin{equation}\label{eq:quditDeph}
\mathcal{E}(\rho) = \eta \rho + (1-\eta) \Delta( \rho ),
\end{equation}
where $\Delta( \rho )$ is the completely dephasing channel wrt to the basis of the oracle Hamiltonians, i.e. $\Delta( \rho ) = \mathrm{diag}(\rho)$.
We consider the total QFI of the ensemble $\{ \mathcal{E} \mathcal{U}_{x,\theta} \}_{x=1}^d$,  where the unitaries $U_{x,\theta}=e^{-\I \theta G_x}$. The corresponding SQL bound~\eqref{eq:SQLbound} can be evaluated analytically
\begin{equation}
\CQFIbnd_{\theta}(\eta) = \frac{4 \eta}{1-\eta} \frac{d-1}{d+ \frac{2}{\eta}} \xrightarrow[d \to \infty]{}\frac{4 \eta}{1-\eta};
\end{equation}
the detailed calculation is relegated to Appendix~\ref{app:quditdephasing}.

To apply the above result to the problem of identifying the oracle Hamiltonian we change parametrization to $\omega = \theta / \tau$ and we choose $\eta_{\gamma,\tau}=e^{-\gamma \tau}$ to obtain the Markovian dephasing channel.
For a total runtime $T=N\tau$ the best bound for frequency estimation is thus
\begin{equation}
\label{eq:freq_dephasing_bound}
\CQFIbnd_{\omega} = \lim_{\tau \to 0} \tau \CQFIbnd(\eta_{\gamma,\tau}) = \frac{4 (d-1)}{\gamma  (d+2)} \leq \frac{4}{\gamma}.
\end{equation}
The bound $\CQFIbnd_{\omega}$ can alternatively be obtained directly from the Markovian bound presented in Appendix~\ref{app:MarkovianBound}, using the Lindblad operators $\{L_i = \sqrt{\gamma} |i\rangle\langle i|\}_{i=1}^d$.
Using inequality~\eqref{eq:freq_dephasing_bound} together with~\eqref{eq:sequentialTboundBures} we obtain a lower bound on the total runtime $T$ required to reach the desired Bures angle $\delta$ between all the possible final states:
\begin{equation}
\label{eq:TlowerboundGrover}
T \geq d \frac{ \gamma \delta^2 }{4 \omega^2 }.
\end{equation}
In particular, a successful identification of all the possible generators $G_x$ means that all final states must be perfectly distinguishable, thus $\delta= \frac{\pi}{2}$.
The corresponding bound on the runtime is then $T \geq d \frac{ \gamma \pi^2 }{ 16 \omega^2 }$.
This closes the conjecture of~\cite{Demkowicz-Dobrzanski2015}.

\subsubsection{Erasure noise}

We notice that the SQL bound~\eqref{eq:HamEstBnd} for diagonal generators can be applied to our problem of discriminating the oracle Hamiltonians in presence of Markovian erasure noise.
We just need to substitute $\eta_{\gamma,\tau}=e^{-\gamma \tau}$ as the noise parameter to obtain
\begin{equation}
\CQFIbnd_{\omega} =  \lim_{\tau \to 0} \tau \CQFIbnd_{\theta}(\eta_{\gamma,\tau}) = \frac{4 (d-1)}{d \gamma} \leq \frac{4}{\gamma}.
\end{equation}
Finally from this we get the same bound on the runtime~\eqref{eq:TlowerboundGrover} as for qudit dephasing.

\section{Discussion and Conclusions}

As multiparameter quantum metrology moves to practical applications it becomes crucial to understand its advantages and limitations.
Our first contribution in this regard is conceptual: we have defined a new measure of metrological incompatibility, that only depends on the local geometry of the parametric family of quantum channels.
In this paper we have evaluated lower bounds on this measure by choosing a particular parametrization and considering only probe incompatibility, which indeed represents the main challenge.
However, further efforts to evaluate this incompatibility measure are certainly needed, since it is a quantity that captures all the obstructions imposed both by quantum mechanics and classical statistics on the task of accurately learning multiple properties of a single quantum channel at once.
From a different point of view, quantum incompatibility is understood as the impossibility to jointly implement two or more input-output devices as components of a larger device~\cite{Heinosaari2016}, leading to a notion of incompatible quantum channels~\cite{Heinosaari2017a,Girard2020}.
Understanding possible connections between these diverse notions of incompatibility is a challenging and interesting open question.

As our main technical result, we have derived widely applicable and computable multiparameter metrological bounds, that apply to several different scenarios.
On the one hand, this is immediately useful to filter out spurious proposals and to assess the performance of feasible protocols.
On the other, taking into account probe incompatibility is a fundamental first step towards a full asymptotic theory of noisy multiparameter quantum metrology.
However, to successfully extend the single-parameter analysis~\cite{Zhou2020}, it will be crucial to investigate the attainability of the bounds and to devise optimal asymptotic strategies.
This is the next step needed to uncover the full potential of multiparameter quantum metrology in realistic scenarios.

Another important extension towards realistic situations is to go beyond models where noise acts independently.
In this regard, the optimal precision for single-parameter metrology with non-Markovian noise has been studied using the same purification arguments we have used and generalized, obtaining a way to numerically evaluate it with an SDP~\cite{Altherr2021}.
While work on this topic is overall still in early stages, we foresee that our extension to multiple parameters will be fundamental in this scenario as well.

Speaking of realistic precision bounds, a further word of caution is in order.
When Heisenberg scaling is possible, a careful analysis of truly optimal protocols generally does not agree with a naive local approach, see e.g.~\cite{Hall2012,Jarzyna2015a,Gorecki2019a}, such as the ones we have applied in this paper.
This discrepancy becomes even more critical for multiparameter problems, where it might affect the scaling of the precision with the number of parameters~\cite{Gorecki2021}.
Luckily, in the noisy SQL scenario there is no such problem~\cite{Jarzyna2015a} and therefore the analysis we have presented in this paper holds even for truly optimal protocols.

On a more practical level, in the the examples we have considered our bounds show that the addition of noise on top of a unitary parameter encoding, while prohibiting Heisenberg scaling, does not affect the amount of probe incompatibility.
This also means that the original advantage (or lack thereof) of simultaneous estimation strategies over separable ones appears to be preserved.
This is certainly an intriguing observation that ought to be investigated more in depth in the future.
In particular, proving a practical advantage of simultaneous strategies even under inevitably noisy working conditions could have important consequences for future technological applications.

Finally, as we have shown by studying the problem of noisy Grover search, the applicability of these theoretical tools extends even beyond quantum metrology.
Indeed, tools from quantum estimation theory are now routinely used in quantum thermodynamics~\cite{Guarnieri2019,Hasegawa2019}, quantum speed limits~\cite{Deffner2017}, quantum algorithms and quantum machine learning~\cite{Meyer2021}.
In particular, single-parameter fundamental metrological bounds have recently found an application in the theory of quantum error correction~\cite{Kubica2020,Zhou2020a,Yang2020b}, providing new perspectives and results.
Accordingly, we expect that such a fertile interplay of different research fields will become even more relevant in the future.
We hope that our results will play a role in these endeavours.

\begin{acknowledgments}
We thank Wojtek~G{\'o}recki for fruitful discussions.
We also thank Janek~Ko{\l}ody{\'n}ski and Christian~Gogolin for sharing with us some of their notes and thoughts on proving the conjecture in~\cite{Demkowicz-Dobrzanski2015}.
This work was supported by the National Science Center (Poland) grant No.\ 2016/22/E/ST2/00559.
\end{acknowledgments}

\appendix

\section{Derivation of channel incompatibility bound~\eqref{eq:probe_inc_cost}}
\label{app:parametrizationbound}
The incompatibility measure~\eqref{eq:incompmeasure} can be lower bounded using the QCRB
\begin{equation}
\label{eq:incompQCRB}
\mathfrak{I}^*(\mathcal{E}_{\parvec}) \geq \max_{\{\vec{w}_x\}} \left(\frac{\min
\limits_{\rho}\tr [W \QFI^{-1}(\mathcal{E}_{\parvec}(\rho))]}{\sum\limits_x  \vec{w}_x^T \QFI^{-1}(\mathcal{E}_{\parvec}(\rho_x^*)) \vec{w}_x} \right),
\end{equation}
where we denote by $\rho^*_x = \argmin_{\rho_x} \vec{w}_x^T \QFI^{-1}(\mathcal{E}_{\parvec}(\rho_x)) \vec{w}_x$ the optimal input probe for estimating the $x$-th parameter ($x$-th scalar function).
One could also obtain a tighter lower bound by using the Holevo Cramér-Rao bound~\cite{Holevo2011b,Demkowicz-Dobrzanski2020} on the numerator of~\eqref{eq:incompmeasure} to retain the (asymptotic) effect of measurement incompatibility.

By rescaling the cost vectors
\begin{equation}
\vec{w}_x \rightarrow \frac{\vec{w}_x}{\sqrt{\vec{w}_x^T \QFI^{-1}(\mathcal{E}(\rho^*_x)) \vec{w}_x}},
\end{equation}
we recast the rhs of~\eqref{eq:incompQCRB} as
\begin{equation}
\mathfrak{I}^*(\mathcal{E}_{\parvec}) \geq
 \frac{1}{\npar}\max_{\{\vec{w}_x\}}
  \min\limits_{\rho} \sum_x \frac{ \vec{w}_x^T \QFI^{-1}(\mathcal{E}_{\parvec}(\rho))\vec{w}_x}{ \vec{w}_x^T \QFI^{-1}(\mathcal{E}_{\parvec}(\rho_x^*)) \vec{w}_x}.
\end{equation}
We now lower bound this quantity further, by restricting the class of cost vectors $\vec{w}_x$ to belong to the set of eigenvectors
 of $\QFI(\mathcal{E}(\rho_x^*))$:
\begin{multline}
\mathfrak{I}^*(\mathcal{E}_{\parvec}) \geq
 \frac{1}{\npar}\max_{\{\vec{w}_x \in \mathcal{W}_x\}}\min\limits_{\rho} \sum_x \frac{ \vec{w}_x^T \QFI^{-1}(\mathcal{E}_{\parvec}(\rho))\vec{w}_x}{ \vec{w}_x^T \QFI^{-1}(\mathcal{E}_{\parvec}(\rho_x^*)) \vec{w}_x},
   \end{multline}
where by $\mathcal{W}_x$  we denoted the set of eigenvectors of $\QFI(\mathcal{E}(\rho_x^*))$.
The restriction to eigenvectors of $\QFI(\mathcal{E}(\rho_x^*))$ is a natural assumption, since when minimizing the cost for a given single parameter the optimal state will tend to carry as much information about the desired parameter with as little correlations as possible with the others (as potential correlations increase the diagonal entries of the inverse QFI matrix).
Moreover, when this is not possible for a given parameter $x$, this indicates that the choice of parametrization is not `natural' for the model considered.
We now use the  eigenvector property to take the vectors under the inverse operation and get:
\begin{multline}
\mathfrak{I}^*(\mathcal{E}_{\parvec}) \geq \\
 \frac{1}{\npar}\max_{\{\vec{w}_x \in \mathcal{W}_x\}}
  \min\limits_{\rho} \sum_x \frac{ \vec{w}_x^T \QFI^{-1}(\mathcal{E}_{\parvec}(\rho))\vec{w}_x}{[\vec{w}_x^T \QFI(\mathcal{E}_{\parvec}(\rho_x^*)) \vec{w}_x]^{-1} (\vec{w}_x^T\vec{w}_x)^2},
   \end{multline}

We can now apply a series of inequalities, analogously to Eq.~\eqref{eq:JinvBound}, and arrive at:
\begin{multline}
\mathfrak{I}^*(\mathcal{E}_{\parvec}) \geq
\frac{1}{\npar}\max_{\{\vec{w}_x \in \mathcal{W}_x\}}
  \min\limits_{\rho} \sum_x \frac{ [\vec{w}_x^T \QFI(\mathcal{E}_{\parvec}(\rho))\vec{w}_x]^{-1}}{[\vec{w}_x^T \QFI(\mathcal{E}_{\parvec}(\rho_x^*)) \vec{w}_x]^{-1}} \geq \\
  \npar \max_{\{\vec{w}_x \in \mathcal{W}_x\}}
  \min\limits_{\rho}\frac{1}{ \sum_x [\vec{w}_x^T \QFI(\mathcal{E}_{\parvec}(\rho_x^*)) \vec{w}_x]^{-1} (\vec{w}_x^T F(\rho) \vec{w}_x)}.
\end{multline}
Finally, when we fix $\vec{w}_x$, i.e. choose a particular parametrization, we arrive at Eq.~\eqref{eq:probe_inc_cost}.

\section{Bound for the adaptive strategy (proof of Theorem~\ref{theo:sequentialCE})}
\label{app:proofTheoSeq}
To derive the bound, we need the following inequality for the operator norm.
\begin{lemmaa} \label{lem:normineq}
Given a set of $\npar$ square matrices $\{ A^x \}_{x=1}^\npar$ and $\npar$ sets of rectangular matrices $\{ L_{k_x}^x \}_{k_x=1}^{n_x}$ (with compatible dimensions) we have the following inequality
\begin{equation}
\label{eq:normineq}
\norm{ \sum_{x=1}^\npar \sum_{k_x=1}^{n_x} L_{k_x}^{x \dag} A^x L_{k_x}^{x} } \leq \max_x \left\{ \norm{ A^x } \right\} \norm{ \sum_{x=1}^\npar \sum_{k_x=1}^{n_x} L_{k_x}^{x \dag} L_{k_x}^{x} } .
\end{equation}
\end{lemmaa}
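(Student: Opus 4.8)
The plan is to establish Lemma~\ref{lem:normineq} by the same vector-testing argument that proves the one-family version $\norm{\sum_k L_k^\dag A L_k}\le\norm{A}\,\norm{\sum_k L_k^\dag L_k}$, inserting a single Cauchy--Schwarz step to stitch the $\npar$ families together. Set $a\coloneqq\max_x\norm{A^x}$ and $B\coloneqq\sum_{x=1}^\npar\sum_{k_x=1}^{n_x}L_{k_x}^{x\dag}L_{k_x}^x\ge 0$. The operator on the left of~\eqref{eq:normineq} is a square matrix of some fixed size $q$, the common number of columns of all the $L_{k_x}^x$ (equivalently, $A^x$ is $p_x\times p_x$ and $L_{k_x}^x$ is $p_x\times q$). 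Since $\norm{M}=\sup\{\,\abs{u^\dag M v}:\norm{u}=\norm{v}=1\,\}$ for any matrix $M$, it suffices to bound the matrix element $u^\dag\bigl(\sum_{x,k_x}L_{k_x}^{x\dag}A^xL_{k_x}^x\bigr)v$ for unit vectors $u,v\in\C^q$.

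First I would rewrite this matrix element as $\sum_{x,k_x}(L_{k_x}^x u)^\dag A^x(L_{k_x}^x v)$ and bound each term with the defining inequality of the operator norm, $\abs{w^\dag A^x z}\le\norm{A^x}\norm{w}\norm{z}\le a\norm{w}\norm{z}$; then apply Cauchy--Schwarz to the sum over the composite index $(x,k_x)$:
\begin{multline*}
\left| u^\dag\left(\sum_{x,k_x}L_{k_x}^{x\dag}A^x L_{k_x}^x\right)v \right| \le a\sum_{x,k_x}\norm{L_{k_x}^x u}\,\norm{L_{k_x}^x v} \\ \le a\left(\sum_{x,k_x}\norm{L_{k_x}^x u}^2\right)^{1/2}\left(\sum_{x,k_x}\norm{L_{k_x}^x v}^2\right)^{1/2}.
\end{multline*}
Since $\sum_{x,k_x}\norm{L_{k_x}^x u}^2=u^\dag B u\le\norm{B}$ when $\norm{u}=1$, and likewise for $v$, the right-hand side is at most $a\norm{B}$. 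Taking the supremum over $u$ and $v$ yields exactly~\eqref{eq:normineq}.

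The computation is elementary, and I expect the only real (and mild) obstacle to be recognizing that the naive triangle inequality is too lossy: it would give $\sum_{x,k_x}\norm{A^x}\norm{L_{k_x}^x}^2$, generally strictly larger than $a\norm{B}$, whereas the Cauchy--Schwarz step is precisely what collapses the sum to $\norm{B}$. A second, minor point is that the $A^x$ need not be Hermitian, so one cannot argue through the operator ordering $A^x\le\norm{A^x}\id$ and monotonicity of $Y\mapsto\sum L^\dag YL$; passing to matrix elements sidesteps this entirely. Equivalently, one may recast the lemma as the statement that the completely positive map $\Phi(Y)=\sum_{x,k_x}\hat L_{k_x}^{x\dag}Y\hat L_{k_x}^x$ --- with $\hat L_{k_x}^x$ the embedding of $L_{k_x}^x$ into the $x$-th block of $\bigoplus_x\C^{p_x}$ --- satisfies $\norm{\Phi(Y)}\le\norm{Y}\,\norm{\Phi(\id)}$, evaluated at the block-diagonal $Y=\bigoplus_x A^x$, for which $\norm{Y}=a$ and $\Phi(\id)=B$; but the direct argument above is shorter and self-contained.
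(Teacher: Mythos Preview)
Your proof is correct. The paper takes a slightly different but closely related route: it stacks all the $L_{k_x}^x$ into a single block-column matrix $\tilde L$ and forms the block-diagonal $\tilde A=\bigoplus_x(\id_{n_x}\otimes A^x)$, so that the left-hand side of~\eqref{eq:normineq} is exactly $\tilde L^\dag\tilde A\tilde L$; submultiplicativity of the operator norm then gives $\norm{\tilde L^\dag\tilde A\tilde L}\le\norm{\tilde L^\dag}\,\norm{\tilde A}\,\norm{\tilde L}$, with $\norm{\tilde A}=\max_x\norm{A^x}$ and $\norm{\tilde L}^2=\norm{\tilde L^\dag\tilde L}=\norm{B}$. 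Your vector-testing argument with a Cauchy--Schwarz step is the same inequality unpacked at the level of matrix elements, and you even note the block-diagonal/CP-map reformulation in your final paragraph. The paper's packaging is a line shorter; your version is more self-contained and makes it explicit why the naive triangle inequality would be too lossy.
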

\begin{proof}
Let us define two matrices
\begin{gather}
\tilde{L} = \left[ L_1^1, \dots, L_{n_1}^1, L_1^2, \dots, L_{n_2}^2, \dots, L_1^p, \dots, L_{n_p}^p \right] \\
\tilde{A}=  \bigoplus_{i=1}^\npar  \id_{n_i} \otimes  A_i,
\end{gather}
i.e. $\tilde{A}$ is a block diagonal matrix with each $A_i$ is repeated in $n_i$ consecutive diagonal blocks.
We have that
\begin{equation}\tilde{L}^\dag \tilde{A} \tilde{L} = \sum_{x=1}^\npar \sum_{j=1}^{n_x} L_j^{x\dag} A^x L_j^{x}.
\end{equation}
Thanks to the submultiplicativity of the operator norm we get
\begin{equation}
\norm{ \sum_{x=1}^\npar \sum_{j=1}^{n_x} L_j^{x\dag} A^x L_j^{x} } = \norm{ \tilde{L}^\dag \tilde{A} \tilde{L} } \leq  \norm{ \tilde{L}^\dag } \norm{  \tilde{A}} \norm{ \tilde{L} },
\end{equation}
upon noticing that
\begin{gather}
\norm{ \tilde{A} } = \max_x \left\{ \norm{ A^x } \right\},  \\
 \norm{ \tilde{L}^\dag }^2 =  \norm{ \tilde{L} }^2 = \norm{ \tilde{L}^\dag \tilde{L} } = \norm{ \sum_{x=1}^\npar \sum_{j=1}^{n_x} L_j^{x\dag} L_j^{x} },
\end{gather}
we get the desired result~\eqref{eq:normineq}.
\end{proof}
When reduced to the single-parameter case we get:
\begin{equation}
\label{eq:normineqDDM}
\norm{ \sum_{j=1}^n L_k^\dag A L_k } \leq \norm{ A } \norm{ \sum_{j=1}^n L_k^{\dag} L_k }.
\end{equation}
With this technical tool we can prove the main theorem.

\begin{proof}
First, we notice from the start that the final bound is expressed in terms of operator norms of the functions $\sum_x \alpha_x$, $\beta_x$ and $\sum_x \beta_x$, which only depend on the Kraus operator of a single copy of the channels $\mathcal{E}_{\theta_x}$.
Crucially, these quantities are unchanged by additional unitaries applied between each use of the channel and by trivial extensions to auxiliary Hilbert spaces.
Therefore we derive the bound considering Kraus operators representing a direct $N$-fold sequential application of the channel: $(\mathcal{E}_{\theta_x})^N$, but the results will be valid also in the general case described in the statement including control operations $\mathcal{V}_i$.

A natural choice of Kraus operators for $N$ sequential applications of the channels is $K_{\vec{k}_x}= K_{k_N} \dots K_{k_1}$.
This choice is not necessarily the optimal, but we use it to find an upper bound to the optimal channel bound~\eqref{eq:seqCQFI} as follows
\begin{equation}
\begin{split}
 & 4 \min_{K_{\vec{k}_x}} \left\Vert \sum_{x=1}^\npar \w_x \sum_{\vec{k}_x} \partial_{\theta_x} \! K_{\vec{k}_x}^\dag \, \partial_{\theta_x}\! K_{\vec{k}} \right\Vert  \\
= & 4 \min_{K_{\vec{k}_x}} \Biggl\Vert \sum_{x} \w_x \sum_{\vec{k}_x} \sum_{i,j=1}^{N} \\
 & \;\; K_{k_1}^\dag \dots \partial_{\theta_x}\! K_{k_i}^\dag \dots K_{k_N}^\dag K_{k_N} \dots \partial_{\theta_x} \! K_{k_j} \dots K_{k_1} \Biggr\Vert;
\end{split}
\end{equation}
each multi-index $\vec{k}_x$ depends on $x$, yet for brevity we suppress the dependence on $x$ when expressing its components $k_i \coloneqq k_{x,i}$.
For additional clarity in the final expressions we will reintroduce the dependence on $x$ when the components of the multi-indices disappear from the calculations.
It is understood that the summation over $k_i$ is inside the summation over $x$ and the two cannot be exchanged, since the Kraus operators $K_{k_x}$ depend on $x$.

By splitting the sum over $i,j$ into diagonal and off-diagonal terms and using the triangle inequality and~\eqref{eq:normineqDDM} we get
\begin{widetext}
\begin{equation}
\label{eq:trJboundinterm}
\CQFI^N_{\vec{\w}} \leq 4 \min_{K_{\vec{k}_x}} \left\{ \sum_{i=1}^N \norm{ \sum_x \w_x \sum_{k_x=1}^{r_x} \partial_{\theta_x} \! K_{k_x}^\dag \, \partial_{\theta_x}\! K_{k_x} }  + \sum_{i<j}^{N} \norm{ \sum_{x} \w_x \sum_{k_i, \dots k_j} \partial_{\theta_x} \! K_{k_i}^\dag \, \dots K_{k_j}^\dag \partial_{\theta_x} \! K_{k_j} \dots K_{k_i}  + \text{h.c.} } \right\}.
\end{equation}
\end{widetext}
We introduce the anti-Hermitian operator
\begin{equation}
\I A_i^x = \sum_{k_{i+1}, \dots k_{j}} K_{k_{i+1}}^\dag \dots K_{k_j}^\dag \, \partial_{\theta_x} \! K_{k_j} \dots K_{k_{i+1}},
\end{equation}
so we can rewrite each term inside the second sum over $i<j$ in~\eqref{eq:trJboundinterm} as follows
\begin{widetext}
\begin{equation}
\norm{ \sum_{x} \w_x \sum_{k_i, \dots k_j} \partial_{\theta_x} \! K_{k_i}^\dag \, \dots K_{k_j}^\dag \partial_{\theta_x} \! K_{k_j} \dots K_{k_i}  + \text{h.c.} } =
\norm{ \sum_x \w_x \sum_{k_i=1}^{r_x} ( \sqrt{z_x} \partial_{\theta_x} \! K_{k_i})^\dag \I A_i^x \left( \frac{K_{k_i}}{\sqrt{z_x}} \right)  + \text{h.c.} },
\end{equation}
where we have also introduced an additional set of real positive parameters $z_x>0$ that leaves the quantity unchanged.
We can expand this quantity further
\begin{multline}
\label{eq:ijsumineqCEbound}
\Biggl\lVert \sum_x \w_x  \sum_{k_x=1}^{r_x} \biggr[  \left( \sqrt{z_x} \partial_{\theta_x} \! K_{k_x} + \I \frac{K_{k_x}}{\sqrt{z_x}} \right)^\dag A_i^x \left( \sqrt{z_x} \partial_{\theta_x} \! K_{k_x} + \I \frac{K_{k_x}}{\sqrt{z_x}} \right)  - z_x \partial_{\theta_x} \! K_{k_x}^\dag \, A_i^x \, \partial_{\theta_x} \! K_{k_x}
    -  \frac{1}{z_x} K_{k_x}^\dag A_i^x K_{k_x} \biggr] \Biggr\lVert  \leq \\
\leq 2 \Biggl(  \norm{ \sum_x \w_x \sum_{k_x=1}^{r_x} \partial_{\theta_x} \! K_{k_x}^\dag \, A_i^x K_{k_x} }  + \norm{ \sum_x \w_x \sum_{k=1}^{r_x} \partial_{\theta_x} \! K_k^\dag ( z_x A_i^x ) \partial_{\theta_x} \! K_k }
+  \norm{\sum_x  \w_x \sum_{k_x} K_{k_x}^\dag  \frac{A_i^x}{z_x} K_{k_x} } \Biggr) \leq\\
\leq 2 \Biggl[ \max_{x}( \norm{ A_i^x } ) \norm{ \sum_x \w_x \sum_{k_x}^{r_x} \partial_{\theta_x} \! K_{k_x}^\dag \,  K_{k_x} }
+  \max_{x} ( z_x \norm{ A_i^x } ) \norm{  \sum_x \w_x \sum_k \partial_{\theta_x} \! K_k^\dag \, \partial_{\theta_x} \! K_k }
+ \max_{x} \left( \frac{ \norm{A_i^x}}{z_x}  \right) \left( \sum_x \w_x \right) \Biggr],
\end{multline}
where we have used the triangle inequality first and then inequality \eqref{eq:normineq} to obtain the last line.
From Eq.~\eqref{eq:normineqDDM} we know that \( \norm{A_i^x} \leq \norm{\beta_x} \), where $\beta_x = \sum_{k} \partial_{\theta_x} \! K_k^\dag \,  K_k = \partial_{\theta_x} \! \vec{K}^\dag \, \vec{K} $; using this inequality, setting $z_x=z$ and performing the sums in~\eqref{eq:trJboundinterm} we obtain
\begin{equation}
\label{eq:optCQFIz}
\CQFI^N_{\vec{\w}}\leq 4 \min_{\mathfrak{h},z} \biggl\{ N \norm{ \sum_{x=1}^{\npar} \w_x \alpha_x } +
N ( N - 1) \max_x ( \norm{\beta_x}) \biggl[ \frac{1}{z} \left( \sum_{x=1}^{\npar} \w_x \right)
+  z \norm{ \sum_{x=1}^\npar \w_x \alpha_x }  +  \norm{ \sum_{x=1}^\npar \w_x \beta_x }  \biggr] \biggr\},
\end{equation}
\end{widetext}
and by performing the explicit minimization over the parameter $z$ we obtain the desired result~\eqref{eq:seqCEbound}.
\end{proof}
We notice that it might be possible to obtain a tighter bound by optimizing over the whole set $\{ z_x\}$ that appears in~\eqref{eq:ijsumineqCEbound} instead of fixing all them to be equal to obtain~\eqref{eq:seqCEbound}.
However, the main use of the parameter $z$ in~\eqref{eq:optCQFIz} is to generalize the bound to an infinitesimal timestep, as we do in Appendix~\ref{app:MarkovianBound}.

\section{Bound for the parallel strategy}
\label{app:parallel}

In the parallel strategy, the action of $N$ channels is described as $\{ \mathcal{E}^{\otimes N}_{\theta_x} \}$.
The total QFI for the parallel strategy with $N$ uses is thus
\begin{equation}
\CQFI^N_{\vec{\w}} \coloneqq \max_{\rho \in \rhospace{\hilb_S^{\otimes N}}} \sum_{x=1}^\npar \w_x \QFI \left( \mathcal{E}^{\otimes N}_{\theta_x} (\rho) \right).
\end{equation}

The following result generalizes the analogous single-parameter one~\cite[Th.~5]{Fujiwara2008}.
\begin{theorema}
\label{theo:parallelCE}
The total entanglement-assisted channel QFI for the parallel scheme with $N$ uses is upper bounded as follows
\begin{equation}
\label{eq:parallelCEbound}
\CQFI^N_{\vec{\w}}  \leq 4 \min_{\mathfrak{h}} \left\{ N \norm{ \sum_{x=1}^{\npar} \w_x \alpha_x } +  N ( N - 1) \norm{ \sum_{i=1}^\npar \w_x \beta_x^2 }\right\},
\end{equation}
where $\alpha_x \coloneqq \partial_{\theta_x}\!\tilde{\vec{K}}_x^\dag \partial_{\theta_x}\!\tilde{\vec{K}}_x$ and $\beta_x = \left( \partial_{\theta_x}\!\vec{K}_x - \I h_x \vec{K}_x \right)^\dag \vec{K}_x$.
\end{theorema}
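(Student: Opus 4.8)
The plan is to transcribe the single-parameter derivation of Fujiwara--Imai~\cite[Th.~5]{Fujiwara2008} into the weighted multiparameter setting, in the same spirit in which Theorem~\ref{theo:multiparFujiImaiGen} generalised their Th.~4, exploiting the fact that in a parallel scheme the $N$ copies of $\mathcal{E}_{\theta_x}$ act on disjoint tensor factors, so the ``interference'' contributions stay two-body and are never mixed together by control operations, unlike in the adaptive case of Theorem~\ref{theo:sequentialCE}.

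First I would reduce to a pure bipartite input: by convexity of the QFI the maximum defining $\CQFI^N_{\vec{\w}}$ is attained on some $\ket{\psi}\in\hilb_S^{\otimes N}\otimes\hilb_A$, and, exactly as in~\eqref{eq:QFIextpuremix}, each $\QFI(\mathcal{E}_{\theta_x}^{\otimes N}\otimes\idch(\ket{\psi}))$ equals $4$ times the minimum over Kraus representations of $\mathcal{E}_{\theta_x}^{\otimes N}$ of $\bra{\psi}(\sum_{\vec{k}}\partial_{\theta_x}\!\tilde K_{\vec{k}}^\dag\,\partial_{\theta_x}\!\tilde K_{\vec{k}})\otimes\id_A\ket{\psi}$. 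Restricting the gauge freedom to the product representations $\tilde K_{\vec{k}}=\tilde K_{k_1}\otimes\dots\otimes\tilde K_{k_N}$ in which every copy carries the same Hermitian $h_x$, so that the single-copy derivative reads $\partial_{\theta_x}\!\tilde{\vec{K}}_x=\partial_{\theta_x}\!\vec{K}_x-\I h_x\vec{K}_x$ as in Theorem~\ref{theo:multiparFujiImaiGen}, can only raise the value and is therefore legitimate for an upper bound. Linearity in $\rho$, convexity in the $h_x$ and compactness of the state space — the same conditions invoked below~\eqref{eq:maxmin} — then let me swap the minimum and the maximum, leaving only the operator norm $\norm{\sum_x\w_x\sum_{\vec{k}}\partial_{\theta_x}\!\tilde K_{\vec{k}}^\dag\,\partial_{\theta_x}\!\tilde K_{\vec{k}}}$ to be controlled.

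Next I would expand the Leibniz-rule derivative of each product Kraus operator into a sum of $N$ terms, each with $\partial_{\theta_x}\!\tilde{\vec{K}}_x$ inserted on one factor, and carry out the Kraus sums. With the trace-preservation identity $\vec{K}_x^\dag\vec{K}_x=\id$ and the definitions $\alpha_x=\partial_{\theta_x}\!\tilde{\vec{K}}_x^\dag\partial_{\theta_x}\!\tilde{\vec{K}}_x$, $\beta_x=\partial_{\theta_x}\!\tilde{\vec{K}}_x^\dag\vec{K}_x$, the ``diagonal'' terms collapse to $\sum_{m=1}^N\id^{\otimes(m-1)}\otimes\alpha_x\otimes\id^{\otimes(N-m)}$, while the off-diagonal ``interference'' terms collapse to a sum over ordered pairs of distinct copies of two-body operators carrying $\beta_x$ on one copy and $\beta_x^\dag$ on the other. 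Weighting by $\w_x$, summing over $x$ and applying the triangle inequality then splits the norm into the diagonal part, equal to $N\norm{\sum_x\w_x\alpha_x}$ since tensoring with identities does not change the operator norm, plus the $N(N-1)$ interference contributions.

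The main obstacle is to control the interference sum so that it packs into $N(N-1)\norm{\sum_x\w_x\beta_x^2}$ rather than the cruder estimate produced by a term-by-term bound. Here the product structure is decisive: each interference term acts non-trivially on only two tensor factors, so, just as in the single-parameter proof of~\cite[Th.~5]{Fujiwara2008}, submultiplicativity of the operator norm — Lemma~\ref{lem:normineq}, applied to the block matrix assembled from the $\vec{K}_{k_x}$ with a block-diagonal coefficient built from the $\beta_x$ — together with $\vec{K}_x^\dag\vec{K}_x=\id$ strips off the Kraus factors and reduces each pair of copies to the contribution $\norm{\sum_x\w_x\beta_x^2}$; summing over the $\binom{N}{2}$ pairs, each counted twice via its Hermitian-conjugate partner, produces the $N(N-1)$ prefactor, and combining with the diagonal part gives~\eqref{eq:parallelCEbound}. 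As consistency checks, $\npar=1$ returns the Fujiwara--Imai parallel bound, and since a parallel strategy is obtained from the adaptive strategy of Theorem~\ref{theo:sequentialCE} by choosing swap control unitaries, \eqref{eq:parallelCEbound} is necessarily no weaker than~\eqref{eq:seqCEbound} and is generically tighter, replacing the entire quadratic expression there — the cross term $\sqrt{(\sum_x\w_x)\norm{\sum_x\w_x\alpha_x}}$ included — by the single operator norm $\norm{\sum_x\w_x\beta_x^2}$.
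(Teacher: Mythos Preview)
Your overall architecture matches the paper's: restrict to product Kraus representations with a common $h_x$ on each copy, expand via Leibniz, separate the diagonal $\alpha_x$-terms from the off-diagonal $\beta_x$-interference, and bound the two pieces. The diagonal part is fine.

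The gap is in the interference step. In a parallel scheme the Kraus sums factorise over tensor factors, so once you carry them out the cross term on a fixed pair of sites $(m,l)$ is already $\sum_x \w_x\,\beta_x\otimes\beta_x$ (identities elsewhere); there are no residual ``Kraus factors to strip''. Hence the object you must bound is $\norm{\sum_x \w_x\,\beta_x\otimes\beta_x}$, and the task is to show that this tensor-product norm is at most $\norm{\sum_x \w_x\,\beta_x^{2}}$. Lemma~\ref{lem:normineq} does not deliver this: its conclusion carries a $\max_x\norm{A^x}$ factor, which would give at best $\max_x\norm{\beta_x}\,\norm{\sum_x\w_x\beta_x}$ or similar, not the required $\norm{\sum_x\w_x\beta_x^{2}}$. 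The paper instead proves and invokes a separate tensor-product inequality (its Lemma~C.2): writing $\sum_i A_i\otimes A_i=\tilde A\tilde B$ with the block row $\tilde A=[A_1,\dots,A_\npar]\otimes\id$ and block column $\tilde B=\id\otimes[A_1;\dots;A_\npar]$, submultiplicativity gives $\norm{\tilde A}\norm{\tilde B}=\norm{\sum_i A_i^{2}}$. This is the same block-matrix/submultiplicativity \emph{technique} that underlies Lemma~\ref{lem:normineq}, but it is a genuinely different \emph{statement}, and it is the missing ingredient in your plan. Replace your appeal to Lemma~\ref{lem:normineq} with this tensor inequality (applied with $A_x=\sqrt{\w_x}\,\beta_x$) and the rest of your argument goes through.
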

Before proving this theorem we need a simple inequality
\begin{lemmaa}
Given a collection of $\npar$ Hermitian matrices $A_i$ we have the following inequality between operator norms
\begin{equation}
\label{eq:normineqTensor}
\norm{ \sum_{i=1}^\npar A_i \otimes A_i } \leq \norm{ \sum_{i=1}^\npar A_i^2 }.
\end{equation}
\end{lemmaa}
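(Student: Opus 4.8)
The plan is to turn the operator-norm bound~\eqref{eq:normineqTensor} into a pair of operator inequalities in the Löwner order and then invoke the elementary fact that $-Y \preceq X \preceq Y$ with $Y \succeq 0$ forces $\norm{X} \leq \norm{Y}$ for Hermitian $X$.

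First I would note that for a single Hermitian $A$ the operators $A \otimes \id \pm \id \otimes A$ are Hermitian, so their squares are positive semidefinite; expanding $(A \otimes \id \mp \id \otimes A)^2 \succeq 0$ and rearranging gives the two-sided bound $-\tfrac12(A^2 \otimes \id + \id \otimes A^2) \preceq A \otimes A \preceq \tfrac12(A^2 \otimes \id + \id \otimes A^2)$. Summing this over $i = 1,\dots,\npar$ and setting $S \coloneqq \sum_{i=1}^\npar A_i^2 \succeq 0$ yields $-\tfrac12(S \otimes \id + \id \otimes S) \preceq \sum_{i=1}^\npar A_i \otimes A_i \preceq \tfrac12(S \otimes \id + \id \otimes S)$.

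Then I would apply norm monotonicity: since both sides are Hermitian and the bounding operator is positive semidefinite, $\norm{\sum_i A_i \otimes A_i} \leq \tfrac12 \norm{S \otimes \id + \id \otimes S}$. The spectrum of the Kronecker sum $S \otimes \id + \id \otimes S$ is the set of all pairwise sums of eigenvalues of $S$, so its operator norm equals $2\norm{S}$; hence $\norm{\sum_i A_i \otimes A_i} \leq \norm{S} = \norm{\sum_i A_i^2}$, which is the claim. This lemma then feeds into the proof of Theorem~\ref{theo:parallelCE} exactly as Lemma~\ref{lem:normineq} does for the sequential case.

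I do not expect a genuine obstacle, since every step is elementary; the only point that warrants an explicit line rather than being taken for granted is the norm-monotonicity step, which follows from $\norm{X} = \sup_{\norm{\psi}=1} \abs{\bra{\psi}X\ket{\psi}}$ for Hermitian $X$ together with $\abs{\bra{\psi}X\ket{\psi}} \leq \bra{\psi}Y\ket{\psi} \leq \norm{Y}$. An alternative that avoids even this is to take a unit vector $\ket{\psi}$ attaining $\norm{\sum_i A_i \otimes A_i}$, write it in vectorized form $\ket{\psi} \leftrightarrow \Psi$ so that $\bra{\psi}(A_i \otimes A_i)\ket{\psi} = \Tr[\Psi^\dag A_i \Psi A_i^{\T}]$, and bound the resulting sum by Cauchy--Schwarz; but the Löwner-order argument is shorter and I would present that one.
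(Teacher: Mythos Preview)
Your proof is correct and complete. The Löwner-order sandwich you derive from $(A_i\otimes\id \mp \id\otimes A_i)^2\succeq 0$, summed over $i$, together with the Kronecker-sum spectrum argument, gives the inequality cleanly.

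It is, however, a genuinely different route from the paper's. The paper factors $\sum_i A_i\otimes A_i = \tilde A\,\tilde B$ with the block row $\tilde A = [A_1\ \cdots\ A_\npar]\otimes\id$ and the block column $\tilde B = \id\otimes[A_1;\cdots;A_\npar]$, then applies submultiplicativity $\norm{\tilde A\tilde B}\le\norm{\tilde A}\norm{\tilde B}$ and reads off $\norm{\tilde A}^2=\norm{\tilde A\tilde A^\dag}=\norm{\sum_i A_i^2}$ (using Hermiticity), and likewise for $\tilde B$. This mirrors the block-matrix/submultiplicativity style of Lemma~\ref{lem:normineq} used in the sequential bound, and in fact yields the slightly more general statement $\norm{\sum_i A_i\otimes A_i}\le\sqrt{\norm{\sum_i A_iA_i^\dag}}\sqrt{\norm{\sum_i A_i^\dag A_i}}$ for non-Hermitian $A_i$. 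Your approach, by contrast, leans directly on Hermiticity through the operator order and the variational characterization of the norm; it is arguably more elementary and self-contained, needing no auxiliary block constructions. Either proof is perfectly adequate here.
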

\begin{proof}
We rewrite the initial matrix as follows
\begin{equation}
\sum_{i=1}^\npar A_i \otimes A_i = \sum_{i=1}^\npar (A_i \otimes \id ) (\id \otimes A_i ) = \tilde{A} \tilde{B},
\end{equation}
where $\tilde{A}=\begin{bmatrix} A_1 \otimes \id
\\
\vdots\\
A_\npar \otimes \id
\end{bmatrix}^\dag = \begin{bmatrix}
A_1 \dots A_\npar \end{bmatrix} \otimes \id $ and $\tilde{B}=\begin{bmatrix} \id \otimes  A_1
\\
\vdots\\
\id \otimes  A_\npar
\end{bmatrix} =
\id
\otimes \begin{bmatrix}   A_1
\\
\vdots\\
  A_p
\end{bmatrix} $.
Now we can use the submultiplicativity of the operator norm
\begin{equation}
\left \lVert \sum_{i=1}^\npar A_i \otimes A_i \right\rVert  = \left \lVert \tilde{A} \tilde{B} \right\rVert \leq \left \lVert \tilde{A}\right\rVert \left \lVert \tilde{B} \right\rVert
\end{equation}
and obtain the desired result by noticing that
$\left \lVert \tilde{A}\right\rVert = \sqrt{\left \lVert \tilde{A} \tilde{A}^\dag\right\rVert}=\sqrt{\left \lVert \sum_i^\npar A_i^2 \right\rVert}$
and
$\left \lVert \tilde{B}\right\rVert = \sqrt{\left \lVert \tilde{B}^\dag \tilde{B}\right\rVert}=\sqrt{\left \lVert \sum_i^\npar A_i^2 \right\rVert}$.
\end{proof}
We can now prove Theorem~\ref{theo:parallelCE}, essentially following~\cite{Fujiwara2008}.
\begin{proof}[Proof of Theorem~\ref{theo:parallelCE}]
The total entanglement-assisted channel QFI of the ensemble $ \{ \mathcal{E}_{\theta_x}^{\otimes N} \}$ is obtained using Theorem~\ref{theo:multiparFujiImaiGen}.
Being a minimization over Kraus operators if we restrict the minimization over operators with a particular form we obtain an upper bound.
The most natural choice for the channels $\mathcal{E}_{\theta_x}^{\otimes N}$ is the tensor product of Kraus operators, defined recursively
\begin{equation}
\tilde{K}^{(N+1)}_{\vec{k}_x} = \tilde{K}^{(N)}_{k_1} \otimes \tilde{K}^{(1)}_{k_2},
\end{equation}
where $\vec{k}_x \in \{1,\dots,r_x\}^N \times \{1,\dots,r_x\}$ is a multi-index and $\tilde{\vec{K}}^{(1)}_x= \tilde{\vec{K}}_x$ are the Kraus operator of the original channel.
Each multi-index $\vec{k}_x$ depends on $x$, yet for brevity we suppress the dependence on $x$ when expressing its components $k_i \coloneqq k_{x,i}$.
We introduce the quantities
\begin{gather}
\alpha_x^{(N)} = \sum_{\vec{k}_x} \partial_{\theta_x} \tilde{K}^{(N)\dag}_{\vec{k}_x} \partial_{\theta_x} \tilde{K}^{(N)}_{\vec{k}_x} \\ \beta_x^{(N)} = \sum_{\vec{k}_x} \partial_{\theta_x} \tilde{K}^{(N)\dag}_{\vec{k}_x} \tilde{K}^{(N)}_{\vec{k}_x},
\end{gather}
and $\alpha_x \coloneqq \alpha_x^{(1)}$ and $\beta_x \coloneqq \beta_x^{(1)}$ are the quantities that appear in the statement of the theorem.
Following the derivation of~\cite{Fujiwara2008} we obtain
\begin{widetext}
\begin{equation}
\sum_{x=1}^{\npar} \alpha_x = \sum_{x=1}^\npar \left( \sum_{\substack{i,j \\ i+j=n-1}} \id^{\otimes i} \otimes \alpha_x \otimes \id^{\otimes j} - 2 \sum_{\substack{i,j,k\\ i+j+k=n-2 }} \id^{\otimes i} \otimes \beta_x \otimes \id^{\otimes j} \otimes \beta_x \otimes \id^{\otimes k} \right),
\end{equation}
\end{widetext}
the only difference being the additional summation over $x$.
Thanks to the triangle inequality
\begin{equation}
\norm{ \sum_{x=1}^\npar \alpha_x^{(N)}} \leq N \norm{ \sum_{x=1}^\npar \alpha_x} + N(N-1) \norm{ \sum_{x=1}^\npar \beta_x \otimes \beta_x };
\end{equation}
and the second term is upper bounded using~\eqref{eq:normineqTensor}  to obtain~\eqref{eq:parallelCEbound}.
\end{proof}
This bound is asymptotically equivalent to the adaptive bound when Heisenberg scaling is not allowed.
From the triangle inequality for the operator norm we also see that
\begin{equation}
\label{eq:tria_ineq_parallel}
\begin{split}
4& \min_{\mathfrak{h}} \left\{ N \norm{ \sum_{x=1}^{\npar} \w_x \alpha_x } +  N ( N - 1) \norm{ \sum_{i=1}^\npar \w_x \beta_x^2 }\right\} \\
&\leq \sum_{x=1}^\npar \w_x 4 \min_{h_x} \left\{ N \norm{ \alpha_x } +  N ( N - 1) \norm{ \beta_x }^2 \right\},
\end{split}
\end{equation}
where the quantity on the right hand side of~\eqref{eq:tria_ineq_parallel} is the sum of the independent single-parameter bounds; this is again a trivial bound that does not take into account inherent incompatibility.

\section{Bound for a general Markovian noise model}
\label{app:MarkovianBound}
We consider a probe system evolving in time according to a Gorini-Kossakowski-Lindblad-Sudarshan (GKLS) master equation:
\begin{multline}
\label{eq:LindbladMultiNoise}
\frac{d \rho}{d t} = -\I \theta_x [ H_x ,\rho ] + \sum_{j=1}^{J_x}  L_{x,j} \rho L_{x,j}^{\dag} \\ - \frac{1}{2} \left( L_{x,j}^{\dag} L_{x,j} \rho + \rho L_{x,j}^{\dag} L_{x,j} \right),
\end{multline}
where the parameter dependence enters linearly in the Hamiltonian part.
We can derive a bound for the total QFI of the most general strategy, which includes the application of arbitrary fast and frequent control operations, by considering the channel $\mathcal{E}_{\theta_x,dt}$, obtained by integrating the master equation for a time $dt$, and taking the limit $dt \to 0$.
For each channel we choose the following $J_x + 1 $ Kraus operators that reproduce the dynamics up to first order in $dt$:
\begin{align}
\label{eq:KrausLind}
K_{x,0} &= \id - \left( \frac{1}{2} \vec{L}_x^\dag \vec{L}_x + \I \theta_x H_x \right) dt + O(dt^2)\\
K_{x,j} &= L_{x,j} \sqrt{dt} + O(dt^{\frac{3}{2}}) \quad j=1,\dots,J_x.
\end{align}
Given the structure of the Kraus operator the Hermitian matrices in the minimization are written in the following block form
\begin{equation}
h_x = \left[
\begin{array}{c|c}
h_x^{0} & \vec{h}_x^{\dag} \\
\hline
\vec{h}_x & \mathbbm{h}_x
\end{array}
 \right].
\end{equation}
With this choice we can follow exactly the same approach used in the single-parameter case~\cite{Demkowicz-Dobrzanski2017, Zhou2017} and fix the total probing time $T$ such that the scheme is equivalent to a sequential one with a discrete number of channel uses $N=T/dt$ and eventually take the limit $dt \to 0$.
The final result is obtained by using the infinitesimal Kraus operators~\eqref{eq:KrausLind} in the sequential bound~\eqref{eq:optCQFIz}, where the free parameter $z$ allows to get a meaningful result for $dt \to 0$.
In particular we focus only on the case in which Heisenberg scaling $T^2$ is not possible and obtain the following SQL bound
\begin{gather}
\label{eq:SQLboundLind}
\CQFI^T_{\vec{\w}} \leq T \CQFIbnd_\vec{\w}, \quad  \CQFIbnd_{\vec{\w}} \coloneqq 4 \min_{\substack{\mathfrak{h} \\ \{\beta_x^{(1)} =0\}}} \norm{ \sum_{x=1}^{\npar} \w_x \alpha_x^{(1)} }
\end{gather}
where $\alpha_x^{(1)}= \left( \vec{h}_x^{(\frac{1}{2})} \id + \mathbbm{h}_x^{(0)} \vec{L}_x \right)^\dag \left( \vec{h}_x^{(\frac{1}{2})} \id + \mathbbm{h}_x^{(0)} \vec{L}_x \right)$,
$\beta_x^{(1)}= H_{x} + h_x^{0(1)} \id + \vec{h}_x^{\dag(\frac{1}{2})} \vec{L}_x + \vec{L}_x^\dag \vec{h}_x^{(\frac{1}{2})} + \vec{L}_x^\dag \mathbbm{h}^{(0)}_x \vec{L}_x $ and the optimization variables are the set $\mathfrak{h} = \left\{ h^{0(1)}_x, \vec{h}^{(\frac{1}{2})}_x, \mathbbm{h}^{(0)}_x  \right\}_{x=1}^\npar$.
The subscript in brackets indicates the corresponding order in $dt$.

The conditions $\beta_x^{(1)}=0$ are known as ``Hamiltonian in the Lindblad span'' (HLS) conditions, since they are equivalent to
\begin{multline}
H_x \in \spn_{\mathbbm{R}} \bigl\{  \id, (L_{x,j})^{\mathrm{H}}, \I ( L_{x,j})^{\mathrm{AH}} , (L_{x,j}^\dag L_{x,j'} )^{\mathrm{H}}, \\ \I ( L_{x,j}^\dag L_{x,j'} )^{\mathrm{AH}} \bigr\}, \nonumber
\end{multline}
where $\mathrm{H}$ and $\mathrm{AH}$ denote the Hermitian and anti-Hermitian parts and the sets are known as the Lindblad spans~\cite{Sekatski2016,Demkowicz-Dobrzanski2017,Zhou2017,Zhou2020} of to the different GKLS master equations.
These conditions have to be violated for all parameters in order to preserve Heisenberg scaling of the total variance with error-correction~\cite{Gorecki2020}.

\section{Recovering previous results}
\label{app:previous}
Here we show that a couple of existing results that have been derived in a different framework can be recovered as purification-based bounds.

\subsection{Unitary parameters}
The situation for unitary parameters is particularly simple, since there is only one Kraus operator.
Kura and Ueda~\cite[Theorem 1]{Kura2017} have derived a general bound and here we show that the same result can be obtained from the purification-based definition of the QFI matrix.
\begin{corollary}
\label{cor:KuraUeda}
For noiseless multiparameter estimation, with a linear parameter encoding $U_{\parvec} =  e^{- \I H_{\vec{\theta}}} = e^{- \I \sum_{x=1}^\npar \theta_x G_x}$, with $\npar$ Hermitian generators $\{ G_x \}_{x=1}^\npar$, we have
\begin{equation}
\label{eq:noiselessKuraUeda}
\tr \QFI( U_{\parvec} \ket{\psi_0} ) \leq \CQFI  \leq 4 \left\lVert \sum_{x=1}^\npar G_{x}^2  \right\rVert.
\end{equation}
\end{corollary}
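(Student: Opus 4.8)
The plan is to obtain the left inequality by a one-line specialization and the right one from Theorem~\ref{theo:multiparFujiImaiGen} with the trivial gauge choice. The left inequality is immediate: $\ket{\psi_0}$ is just one admissible probe, so $\tr \QFI(U_{\parvec}\ket{\psi_0})$ cannot exceed the maximum of $\tr\QFI(U_{\parvec}(\rho))=\sum_{x}\QFI_{xx}(U_{\parvec}(\rho))$ over all (even entangled) input states, which for unit weights is exactly the definition of $\CQFI$.

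For the right inequality I would use that a unitary channel has a single Kraus operator $\vec{K}_x=U_{\parvec}$, so each gauge ``matrix'' $h_x$ collapses to a single real scalar. Writing $\partial_{\theta_x}U_{\parvec}=-\I\,\mathcal{G}_x\,U_{\parvec}$ with the Hermitian effective generator $\mathcal{G}_x=\int_0^1 e^{-\I s H_{\parvec}}\,G_x\,e^{\I s H_{\parvec}}\,\mathrm{d}s$, a direct computation gives $\alpha_x=U_{\parvec}^{\dag}\bigl(\mathcal{G}_x+h_x\id\bigr)^2 U_{\parvec}$; since the operator norm is unitarily invariant, Theorem~\ref{theo:multiparFujiImaiGen} yields $\CQFI\le 4\min_{\{h_x\}}\norm{\sum_{x}\bigl(\mathcal{G}_x+h_x\id\bigr)^2}$. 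Taking $h_x=0$ for all $x$ already gives $\CQFI\le 4\norm{\sum_{x}\mathcal{G}_x^2}$.

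The only genuinely nontrivial step, and the one I expect to be the main obstacle, is trading the effective generators $\mathcal{G}_x$ for the bare generators $G_x$, i.e.\ proving $\norm{\sum_{x}\mathcal{G}_x^2}\le\norm{\sum_{x}G_x^2}$. I would argue pointwise: for any unit vector $\ket{\phi}$, Jensen's inequality for the convex map $v\mapsto\norm{v}^2$ gives $\bra{\phi}\mathcal{G}_x^2\ket{\phi}=\norm{\mathcal{G}_x\ket{\phi}}^2\le\int_0^1\norm{e^{-\I s H_{\parvec}}G_x e^{\I s H_{\parvec}}\ket{\phi}}^2\mathrm{d}s=\int_0^1\bra{\phi}e^{-\I s H_{\parvec}}G_x^2 e^{\I s H_{\parvec}}\ket{\phi}\,\mathrm{d}s$; summing over $x$ and bounding the integrand by $\norm{\sum_{x}G_x^2}$ via unitary invariance of the norm yields $\bra{\phi}\sum_{x}\mathcal{G}_x^2\ket{\phi}\le\norm{\sum_{x}G_x^2}$, and since $\sum_{x}\mathcal{G}_x^2\ge 0$ this bounds its operator norm. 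Chaining the three steps closes the proof.

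As an aside, the same computation can be run directly from the purification-based characterization of Theorem~\ref{th:purifdefQFIM} applied to the (trivial) purification of $U_{\parvec}\ket{\psi_0}$: there $\ket{\partial_x\Psi_{\parvec}}=-\I\,\mathcal{G}_x\,U_{\parvec}\ket{\psi_0}$, hence $\tr\QFI(U_{\parvec}\ket{\psi_0})\le 4\sum_{x}\bra{\psi_0}U_{\parvec}^{\dag}\mathcal{G}_x^2 U_{\parvec}\ket{\psi_0}\le 4\norm{\sum_{x}\mathcal{G}_x^2}$, which reproduces the middle and right terms simultaneously. Either route makes it transparent that the bound is meaningful precisely when the generators fail to commute, since for commuting $G_x$ one has $\mathcal{G}_x=G_x$ and the inequalities become equalities on the optimal (maximally entangled) probe.
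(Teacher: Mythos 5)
Your argument is correct, and its skeleton is the same as the paper's: specialize Theorem~\ref{theo:multiparFujiImaiGen} to the single-Kraus unitary channel (so each $h_x$ is a real scalar), take the admissible but generally suboptimal choice $h_x=0$, and use the Duhamel representation of $\partial_x U_{\parvec}$; the left inequality is, as you say, just the definition of $\CQFI$ with unit weights. The only genuine divergence is how the final estimate $\norm{\sum_x \mathcal{G}_x^2}\leq\norm{\sum_x G_x^2}$ is proved (note that $4\norm{\sum_x\partial_xU_{\parvec}^\dag\partial_xU_{\parvec}}=4\norm{\sum_x\mathcal{G}_x^2}$, so the intermediate quantity is identical in both routes). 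The paper expands it as a double integral over $\alpha,\alpha'$, pulls the operator norm inside by the triangle inequality, and then applies the sandwich inequality~\eqref{eq:normineqDDM} with $A=e^{-\I(\alpha-\alpha')H_{\parvec}}$ and $L_k=G_x$; you instead stay with a single integral and argue at the level of quadratic forms, using Jensen's inequality for the vector-valued average $\mathcal{G}_x\ket{\phi}$, unitary invariance, and positivity of $\sum_x\mathcal{G}_x^2$ to recover the operator norm. Both steps are valid and of comparable length; your version is slightly more elementary in that it avoids the cross-term operator $e^{-\I(\alpha-\alpha')H_{\parvec}}$ and the auxiliary norm lemma, whereas the paper's route reuses Lemma~\ref{lem:normineq}, which it needs elsewhere anyway. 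One caveat concerns only your closing aside: for commuting generators one indeed has $\mathcal{G}_x=G_x$, but the chain of inequalities does not generally become an equality --- for the $d$ diagonal generators of $\mathrm{U}(d)$ the bound gives $4\norm{\sum_x G_x^2}=4$ while the noiseless optimum is $\CQFI_{\mathrm{diag}}=4(d-1)/d$ --- though this remark is immaterial to the corollary and to your proof of it.
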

\begin{proof}
From Eq.~\eqref{eq:singleuseCEbound} for a unitary evolution we have
\begin{equation}
\tr \QFI ( U_{\parvec}  \ket{\psi_0} ) \leq 4 \min_{ \mathfrak{h} }  \norm{ \sum_{x=1}^\npar  \partial_x \tilde{U}_{\parvec}^\dag  \partial_x \tilde{U}_{\parvec} } \leq  4 \norm{ \sum_{x=1}^\npar  \partial_x U_{\vec{\theta}}^\dag  \partial_x U_{\parvec} },
\end{equation}
since the choice $h_x=0$ needs not be optimal.
We recall the formula for the derivative of a unitary~\cite{Baumgratz2015}
\begin{equation}
\partial_x U_{\parvec} = -\I U_{\parvec} \left( \int_0^1 \! d \alpha \, e^{\I \alpha H_{\parvec}}  G_x e^{-\I \alpha H_{\parvec}} \right).
\end{equation}
We can use the triangle inequality for the integral to get the upper bound
\begin{align}
& \norm{ \sum_{x=1}^\npar  \partial_x U_{\parvec}^\dag  \partial_x U_{\parvec}} \\
= &\norm{ \sum_{x=1}^\npar \left( \int_0^1 d \alpha e^{\I \alpha H_{\parvec} } G_x e^{-\I \alpha H_{\parvec} } \right)\left( \int_0^1 d \alpha' e^{\I \alpha' H_{\parvec} } G_x e^{-\I \alpha' H_{\parvec} } \right)   } \nonumber  \\
\leq  & \int_0^1 d \alpha \int_0^1 d \alpha' \norm{ \sum_{x=1}^\npar \left(  e^{\I \alpha H_{\parvec }} G_x e^{-\I \alpha H_{\parvec} } \right)\left(  e^{\I \alpha' H_{\parvec} } G_x e^{-\I \alpha' H_{\parvec} } \right) } \nonumber  \\
= & \int_0^1 d \alpha \int_0^1 d \alpha' \norm{  \sum_{x=1}^\npar G_x e^{-\I (\alpha-\alpha') H_{\parvec} } G_x  } \nonumber  \\
\leq & \int_0^1 d \alpha \int_0^1 d \alpha' \norm{  e^{-\I (\alpha-\alpha') H_{\parvec} } } \norm{ \sum_{x=1}^\npar G_x^2 \right\rVert  = \left\lVert \sum_{x=1}^\npar G_x^2  } , \nonumber
\end{align}
where we have used the inequality \eqref{eq:normineqDDM}.
\end{proof}
In particular, when the parameters are frequencies $\omega_i = \theta_i / t$ we get $\CQFI  \leq 4 t^2 \norm{ \sum_{x=1}^{\npar} G_{x}^2 }$ as in~\cite{Kura2017}.

\subsection{RLD channel bound}
\label{app:RLD}
Hayashi introduced a channel bound based on the right logarithmic derivative (RLD) QFI~\cite{Hayashi2011} for a single-parameter family of channels.
Further properties and connections with hypothesis testing have been explored in~\cite{Katariya2020a}.
While the applicability of this construction is more limited than general purification-based bounds~\cite{Demkowicz-Dobrzanski2012}, it has the advantage of being expressed only in terms of the Choi–Jamiołkowski (CJ) matrix of the channel.
Very recently, the same approach has been extended to multiple parameters in~\cite{Katariya2020b}.
Following the reasoning of~\cite{Kolodynski2014} we show that, when it is defined, the multiparameter RLD channel bound corresponds to a particular purification, thus being less tight than the optimal bound.

The CJ matrix of the channel $\mathcal{E}_{\parvec}: \mathcal{T}(\hilb_\mathrm{in} ) \to  \mathcal{T}(\hilb_\mathrm{out}) $ is an unnormalized state on the space $\mathcal{T}(\hilb_\mathrm{out} \otimes \hilb_A)$, where $\dim \hilb_A = d_{\mathrm{out}}$ defined as
\begin{equation}
\Omega_{\parvec} = \mathcal{E}_{\parvec} \otimes \idch (  | \id \rangle \langle \id | ) = \sum_{ij} \mathcal{E}_{\parvec} ( | {i} \rangle \langle  j |_S) \otimes | i \rangle \langle j|_A  ,
\end{equation}
where $\{ \ket{j}_S \}_{j=1}^{d_{\mathrm{in}}}$ and $\{ \ket{j}_A \}_{j=1}^{d_{\mathrm{out}}}$ are orthonormal bases of $\hilb_\mathrm{in}$ and $\hilb_A$, while $\ket{\id}=\sum_{i=1}^{d_\mathrm{in}} \ket{i}_S \ket{i}_A$ is an unnormalized maximally entangled state.
We use the compact notation $\ket{M}=\sum_{i,j=1}^{d_\mathrm{in}} \braket{i|M|j} \ket{i} \ket{j} = M \otimes \id \ket{\id} = \id \otimes M^T \ket{\id}$ that can be used to write the CJ matrix as $\Omega_{\parvec}=\sum_{i=1}^{l} | K_i \rangle \langle K_i | $ where the operators $K_i$ are an arbitrary Kraus decomposition of the channel.
The CJ matrix can be diagonalized as $\Omega_{\parvec}=\sum_{i=1}^{r} \lambda_i | \Psi_i \rangle \langle \Psi_i |$, where $\lambda_i>0$ and $\braket{\Psi_i|\Psi_j}=\delta_{ij}$ and this defines the canonical Kraus decomposition $\ket{K_i}=\sqrt{\lambda_i}\ket{\Psi_i}$.

The following bound for a general adaptive strategy with $N$ uses of the channel was obtained in~\cite{Katariya2020b} (we consider $W=\id$ without loss of generality, since it can be understood as a reparametrization)
\begin{gather}
\label{eq:RLDseqbound}
\Delta^2 \tilde{\parvec} \geq \frac{\npar^2}{ N \CQFIbnd^{\mathrm{R}} } \\
\CQFIbnd^{\mathrm{R}}=\norm{ \sum_{x=1}^\npar \Tr_A \left[ (\partial_x \Omega_{\parvec}) \Omega_{\parvec}^{-1} (\partial_x \Omega_{\parvec}) \right] },
\end{gather}
valid when the following finiteness condition holds
\begin{equation}
\label{eq:finiteRLD}
\sum_{x=1}^\npar \left( \partial_x \Omega_{\parvec} \right)^2 \Pi^{\perp}_{\Omega} = 0,
\end{equation}
where $\Pi^{\perp}_{\Omega}$ is the projector on the kernel of the CJ matrix $\Omega_{\parvec}$ and the inverse is taken on the support.
Otherwise the bound is trivial (diverging) when \eqref{eq:finiteRLD} is not satisfied.
In a moment, we will show that this condition is equivalent to the following conditions
\begin{multline}
\label{eq:locnonextr}
\partial_x K_i = \sum_{j} \nu_{x,ij} K_i \\
\implies \partial_x \Omega_{\parvec} = \sum_{ij} \mu_{x,ij} | K_i \rangle \langle K_j| \; \forall \, x = 1, \dots, \npar,
\end{multline}
where $\{ \nu_{x} \}$ are $\npar$ complex matrices, $\mu_{x} = \nu_{x} + \nu_x^{\dag}$ are twice their Hermitian part and $\{ K_i \}$ is the canonical Kraus representation.
This also means that the partial derivatives of the CJ matrix vanish outside the support of the CJ matrix.
For a single-parameter channel this condition is known as $\varphi$-non-extremality~\cite{Demkowicz-Dobrzanski2012,Kolodynski2014}.
When it is satisfied for all the parameters of a quantum channel we dub it \emph{local non-extremality}, or equivalently we say the channel to be locally non-extremal.
Now we proceed to show the equivalence of local non-extremality~\eqref{eq:locnonextr} and the finiteness condition~\eqref{eq:finiteRLD}.
The situation is essentially equivalent to the single-parameter case described in~\cite{Kolodynski2014}, modulo minor observations.

When the channel is locally non-extremal~\eqref{eq:locnonextr} the finiteness condition~\eqref{eq:finiteRLD} holds trivially, since $\bra{K_i} P_{\Omega}^{\perp} = 0 \; \forall \, i$.
On the other hand, we notice that the condition~\eqref{eq:finiteRLD} is equivalent to
\begin{equation}
\label{eq:PerpProjCond}
\Pi^{\perp}_{\Omega} \left[ \sum_{x=1}^\npar (\partial_x \Omega_{\parvec})^2 \right] \Pi^{\perp}_{\Omega}=0,
\end{equation}
since the matrix $\sum_{x=1}^\npar (\partial_x \Omega_{\parvec})^2$ is Hermitian.
We write the derivatives of the Kraus operators separating the components in the support of $\Omega_{\parvec}$ and those in the kernel: $\ket{\partial_x K_i} = \sum_{j=1}^r \nu_{x,ij} \ket{K_j} + \ket{L_{x,i}}$, where $\braket{K_i| L_{x,j}}=0 \; \forall \, i,j=1,\dots,r \, \forall\, x=1,\dots,\npar$ and $\{ \nu_{x} \}$ are $\npar$ complex matrices of dimension $r{\times}r$.
The condition~\eqref{eq:PerpProjCond} becomes
\begin{equation}
\sum_{x=1}^\npar  \left(  \sum_{i=1}^r |L_{x,i}\rangle \langle K_i  | \right) \left( \sum_{j=1}^r |K_j\rangle \langle L_{x,i}  | \right) = 0.
\end{equation}
This equality has the form $\sum_i A_i^\dag A_i =0$ and since $A_i^\dag A_i \geq 0$ we must have that $A_i = 0 \; \forall i$, furthermore since the vectors $\ket{K_i}$ are orthogonal we obtain $\ket{ L_{x,i}} =0 \; \forall i,x$, which means
\begin{equation}
\label{eq:locnonextr2}
\partial_x K_i = \sum_{j} \nu_{x,ij} K_i \qquad \forall \, x = 1, \dots, \npar.
\end{equation}
From this we have that
\begin{align}
& \partial_x \Omega_{\parvec} = \sum_{i=1}^r | \partial_x K_i \rangle \langle  K_i | + | K_i \rangle \langle  \partial_x  K_i |  \\
& = \sum_{i,j=1}^r \nu_{x,ij} | K_j \rangle \langle  K_i | + \nu_{x,ij}^{*} | K_i \rangle \langle  \partial_x  K_j | = \sum_{i,j=1}^r \mu_{x,ij} | K_j \rangle \langle  K_i |, \nonumber
\end{align}
where $\mu_{x} = \nu_x + \nu_{x}^{\dag}$ is twice the Hermitian part.

Now we can show that~\eqref{eq:locnonextr} implies the satisfaction of the HKS condition for all parameters and that the bound~\eqref{eq:RLDseqbound} has the same form $4\norm{ \sum_x \alpha_x }$ of purification-based bounds.
From the derivatives of the Kraus operators~\eqref{eq:locnonextr} we obtain the equality
\begin{align}
&\frac{\I}{2} \sum_i | K_i \rangle \langle \partial_x K_i | -  | \partial_x K_i \rangle \langle K_i | \\
 &= -\frac{\I}{2} \sum_{i,j} (\nu_x - \nu_{x\dag})_{ij} | K_i \rangle \langle K_j | = \sum_{i,j} (h^\mathrm{R}_{x})_{ij} | K_i \rangle \langle K_j |, \nonumber
\end{align}
where $h^\mathrm{R}_{x}=-\I \nu_{x}^{\mathrm{A}} =  -\frac{\I}{2}(\nu^x - \nu^{x\dag}) $ and by partial tracing over $\hilb_A$ we obtain the HKS conditions $\partial_x \vec{K}^\dag \vec{K}=-\I \vec{K}^\dag h^\mathrm{R}_x \vec{K} $, since $\partial_x\vec{K}^\dag \vec{K} = - \vec{K}^\dag \partial_x \vec{K} $.
Since $\braket{K_i|\Omega_{\parvec}^{-1}|K_j} = \delta_{ij} $ the operators appearing inside the summation in~\eqref{eq:RLDseqbound} can be written as follows
\begin{align}
\Tr_{A}\left[ (\partial_x \Omega_{\parvec}) \Omega_{\parvec}^{-1} (\partial_x \Omega_{\parvec})  \right] & =
\Tr_{A}\left[ \sum_{i,j} (\mu_{x}^2)_{ij} |K_j \rangle \langle K_i |\right] \nonumber \\
& = 4 \partial_x \tilde{\vec{K}}^\dag \partial_x \tilde{\vec{K}},
\end{align}
where $\partial\tilde{\vec{K}}=\partial_x \vec{K} - \I h^\mathrm{R}_x \vec{K}$, since $\partial_x \vec{K} = \nu_x \vec{K}$ from~\eqref{eq:locnonextr} and $\nu_x - \I h^{\mathrm{R}}_x = \frac{1}{2} \mu^x $ by definition.
Therefore we have shown that the bound~\eqref{eq:RLDseqbound} has the same form of the bound~\eqref{eq:SQLbound}, but is evaluated with a generally suboptimal choice $\{ h_x^{\mathrm{R}} \}_{x=1}^\npar$, implying that
\begin{equation}
\CQFIbnd \leq \CQFIbnd^{\mathrm{R}}.
\end{equation}

\section{Semidefinite programs for the bounds}
\label{app:SDP}
We introduce the following matrix
\begin{equation}
\vec{D} = \begin{bmatrix}
\sqrt{\w}_1 \left( \partial_{\theta_1} \vec{K}_1 -\I h_1 \vec{K}_1 \right) \\
\vdots \\
\sqrt{\w}_{\npar} \left( \partial_{\theta_\npar} \vec{K}_\npar -\I h_{\npar} \vec{K}_\npar \right)
\end{bmatrix},
\end{equation}
where the derivatives of the Kraus operator of each channel $\mathcal{E}_{\theta_x}$ are put in column.
This matrix has dimension $ \bar{d} {\times} d_{\mathrm{in}}$, where $\bar{d}=d_{\mathrm{out}} \sum_{x=1}^{\npar} r_x$, where $r_x$ is the number of Kraus operators of each channel.
When considering the multiparameter estimation scenario with a single channel with $r$ Kraus operators, we have $\bar{d}= \npar d_{\mathrm{out}} r$.

The bound for a single use of the channel~\eqref{eq:singleuseCEbound} can be rewritten as the following SDP
\begin{equation}
\label{eq:SDPtotalQFI}
\CQFI = 4 \min_{t,\{ h_x \}} t \\
\quad \text{subject to }
\begin{bmatrix}
t \id_{d_\mathrm{in}} & \vec{D}^\dag \\
\vec{D} & \id_{\bar{d}}
\end{bmatrix} \geq 0.
\end{equation}

The finite-$N$ bound~\eqref{eq:parallelCEbound} for the parallel strategy, i.e. $\CQFIbnd_N = 4 \min_{\mathfrak{h}} \left\{ \norm{ \sum_{x=1}^{\npar} \w_x \alpha_x } +  ( N - 1) \norm{ \sum_{i=1}^\npar \w_x \beta_x^2 }\right\}$, can be obtained similarly to the single-parameter case~\cite{Koodynski2013} as follows
\begin{equation}
\begin{split}
\label{eq:SDPtotalQFI_finiteN}
  \CQFIbnd_N = 4 & \min_{t,v,\{ h_x \}} \{ t + (N-1) v \}
  \\ & \text{subject to } 
  \begin{bmatrix}
  t \id_{d_\mathrm{in}} & \vec{D}^\dag \\
  \vec{D} & \id_{\bar{d}}
  \end{bmatrix} \geq 0 \; \;
  \begin{bmatrix}
    v \id_{d_\mathrm{in}} & \vec{B}^\dag \\
    \vec{B} & \id_{\bar{d}}
  \end{bmatrix} \geq 0,
\end{split}
\end{equation}
where we have introduced
\begin{equation}
  \vec{B} = 
  \begin{bmatrix}
    \sqrt{\w}_1  \left( \partial_{\theta_1}\!\vec{K}_1 - \I h_1 \vec{K}_1 \right)^\dag \vec{K}_1 \\
    \vdots \\ 
    \sqrt{\w}_\npar  \left( \partial_{\theta_\npar}\!\vec{K}_\npar - \I h_\npar \vec{K}_\npar \right)^\dag \vec{K}_\npar
  \end{bmatrix}.
\end{equation}

The asymptotic SQL bound~\eqref{eq:SQLbound} is obtained simply by imposing the additional linear constraints $\beta_x=0 \; \forall \, x =1,\dots \npar$ to the SDP~\eqref{eq:SDPtotalQFI}.

Finally, the asymptotic SQL bound for Markovian noise~\eqref{eq:SQLboundLind} is obtained in the same way, but using
\begin{equation}
\vec{D}_\mathrm{Mark} =\begin{bmatrix}
\sqrt{\w_1} \left(\vec{h}_1^{(\frac{1}{2})} \id + \mathbbm{h}_1^{(0)} \vec{L}_1 \right) \\
\vdots \\
\sqrt{\w_2} \left(\vec{h}_{\npar}^{(\frac{1}{2})} \id + \mathbbm{h}_{\npar}^{(0)} \vec{L}_\npar \right),
\end{bmatrix}
\end{equation}
instead of $\vec{D}$; here $d_{\mathrm{in}}=d_{\mathrm{out}}=d$, $\bar{d}=d \sum_{x=1}^{\npar} J_x$ (where $J_x$ is the number of collapse operators of each master equation) and the optimization runs over the Hermitian matrices $\left\{ h^{0(1)}_x, \vec{h}^{(\frac{1}{2})}_x, \mathbbm{h}^{(0)}_x  \right\}_{x=1}^\npar$ with the linear constraints $\beta_x^{(1)}=0$.

\section{Algorithm to find an optimal state and evaluate its QFI matrix}
\label{app:optistate}
The derivation of the analogous single-parameter algorithm~\cite{Zhou2019e,Zhou2020} relies on Sion's minimax theorem and remains unchanged for our multiparameter figure of merit.
The only difference is that now we have a collection of $\npar$ matrices $\mathfrak{h}=\{ h_x \}_{x=1}^\npar$ instead of just one.
Thus we can adapt the two-step procedure of~\cite[Appendix F]{Zhou2020}; the algorithm to find an optimal state goes as follows (for simplicity we introduce the operator $\bar{\alpha}=\sum_{x=1}^\npar q_x \alpha_x$).
\begin{enumerate}
\item Find a set of optimal Hermitian matrices $\mathfrak{h}^\star$ by solving the SDP~\eqref{eq:SDPtotalQFI}, such that the operator $\bar{\alpha}^\star =  \bar{\alpha} |_{\mathfrak{h}=\mathfrak{h}^\star} $ satisfies $\min_{\mathfrak{h}} \norm{ \bar{\alpha} } = \norm{\bar{\alpha}^\star} $.
\item The support of the optimal state $\rho^\star$ is the eigenspace of the largest eigenvalue of the operator $\bar{\alpha}^\star$ and $\forall x$ it satisfies the constraints
\begin{equation}
\label{eq:optistate_const}
\begin{split}
\Re \left\{ \Tr \left[ \rho^\star ( \I \vec{K}_x^\dag \Delta h_x ) ( \partial_x \vec{K}_x - \I h_x \vec{K}) \right] \right\}=0 \\
\forall \, \Delta h_x \in \mathbbm{C}_{r_x{ \times }r_x}, \; \Delta h_x = (\Delta h_x)^\dag.
\end{split}
\end{equation}
\end{enumerate}
The constraints~\eqref{eq:optistate_const} are linear constraints on $\rho^\star$ and in practice they are imposed by fixing a basis of $r_x{\times}r_x$ Hermitian matrices.
We remark that the optimal state is generally mixed when the largest eigenvalue of $\bar{\alpha}^\star$ has multiplicity greater than one.
When a mixed state is optimal it means that an optimal strategy is to use an extended channel and take advantage of entanglement with the auxiliary system, as it is clear from the proof of Theorem~\ref{theo:multiparFujiImaiGen}.

This algorithm allows one to find an optimal state attaining the total QFI, even when it corresponds to a random sensing scenario and the Kraus operators $\vec{K}_x$ pertain to different quantum channels.
If we work in the multiparameter estimation scenario there is only one vector $\vec{K}$ of Kraus operators; once an optimal state is found, the QFI matrix elements are evaluated as
\begin{equation}
\QFI_{xy} = 4 \Re \left\{ \Tr \left[ \rho^\star  ( \partial_x \vec{K} - \I h^\star_x \vec{K})^\dag ( \partial_y \vec{K} - \I h^\star_y \vec{K})\right] \right\},
\end{equation}
since the matrices $\mathfrak{h}^\star$ correspond to the optimal purification we can apply the purification-based definition~\eqref{eq:purifdefQFIM}.

The same algorithm can be adapted to the asymptotic case by solving the SDP with the constraints $\beta_x = 0$ and additionally imposing $\vec{K}_x^\dag \Delta h_x \vec{K}_x =0 \; \forall x$.
In practice, these constraints are imposed by using a basis of the nullspace of the map $h \mapsto \sum_{j,k=1}^{r_x} h_{jk} K_{x,j}^\dag K_{x,k}$ from $r_x{\times}r_x$ to $d_\mathrm{in}{\times }d_{\mathrm{in}}$ Hermitian matrices.

\section{Details on the evaluation of the bounds for quantum metrology applications}

In the following calculations we make a series of ansätze on the form of the optimal matrices $\mathfrak{h}$.
These are mostly inspired by the numerical solution and justified by symmetry arguments.
While the optimality of the presented solutions has been tested against numerical results, we remark that any allowed choice of $\mathfrak{h}$ provides a valid bound on the total QFI.
To ease the notation, in this section we move the parameter label $x$ of the matrices $\mathfrak{h}$ to the superscript when necessary.

\subsection{Generalized amplitude damping channel}
\label{app:GADchannel}
The generalized amplitude damping channel is a qubit channel with the following Kraus operators~\cite{nielsen2010quantum}
\begin{equation}
\begin{split}
K_0 & = \sqrt{1-\nu} \begin{bmatrix} 1 & 0 \\ 0 & \sqrt{1-\gamma} \end{bmatrix}\quad K_1 = \sqrt{1-\nu} \begin{bmatrix} 0 & \sqrt{\gamma} \\ 0 & 0 \end{bmatrix}  \\
K_2 &=\sqrt{\nu} \begin{bmatrix} \sqrt{1-\gamma} & 0 \\ 0 & 1 \end{bmatrix}  \quad K_3= \sqrt{\nu} \begin{bmatrix} 0 & 0  \\ \sqrt{\gamma} & 0 \end{bmatrix}.
\end{split}
\end{equation}
and we are interested in the estimation of both the parameters $\nu$ and $\gamma$.
The estimation of $\gamma$ was studied in great detail in~\cite{Fujiwara2004}.
The multiparameter problem was studied in~\cite{Katariya2020b} as an application of the RLD channel bound introduced in Appendix~\ref{app:RLD}.

For this model sequential or parallel strategies do not give any advantage and we observe $\CQFI = \CQFIbnd$ and $\mathfrak{I} = \mathfrak{I}_{\infty}$.
The only nonzero elements of the optimal purification matrices are $h^\nu_{02}=h^{\nu*}_{02}= \I A $ and $h^\gamma_{02}=h^{\gamma*}_{02} = \I B$, where $A$ and $B$ are real numbers, but we do not report the full details to find $A$ and $B$ as functions of $\gamma$ and $\nu$.
However, it is simple to check this statements numerically by solving the SDP, the code for this example can be found in~\cite{githubrepo}.
We mention that the optimal scheme is to use a probe state $\sqrt{a} \ket{00} + \sqrt{1-a} \ket{11}$, making use of an auxiliary system, i.e. considering the extended channel $\mathcal{E}_{\nu,\gamma}\otimes \idch$.
The optimal degree of entanglement $a$ with the auxiliary system depends on the parameter values via a rather complicated function.

Qualitatively, the probe incompatibility cost is an decreasing function of $\gamma$ for a fixed $\nu \neq \frac{1}{2}$.
Moreover the problem is symmetrical around the value $\nu = \frac{1}{2}$ and incompatibility decreases symmetrically as $\nu$ goes from the extremes $0$ and $1$ to $\frac{1}{2}$ and only for $\eta=\frac{1}{2}$ there is no probe incompatibility.

For this problem it is interesting to compare our result to the RLD bound, already evaluated in~\cite[Appendix F]{Katariya2020b}.
One can see immediately that the multiparameter RLD  is not tight for this problem and it does not detect any probe incompatibility, since it can be easily checked that
\begin{equation}
\CQFIbnd^{\mathrm{R}} = \mathfrak{B}^{\mathrm{R}}_{\gamma} + \mathfrak{B}^{\mathrm{R}}_{\nu}.
\end{equation}
To give an idea we report numerical results for a particular choice of parameters, $(\nu,\gamma) = (\frac{1}{4},\frac{1}{2})$ for which we obtain $\CQFIbnd^{\mathrm{R}}  \approx 10.67 > \CQFI_{\nu} + \CQFI_{\gamma}  \approx 4.72 > \CQFI \approx 3.84$.

\subsection{Hamiltonian tomography with erasure noise}
\label{app:HamTomErause}

\subsubsection{Diagonal generators (lossy multi-phase estimation)}
\label{app:lossmultiphaseDeriv}
With the Kraus operators~\eqref{eq:ErasureQuditKraus} we obtain
\begin{equation}
K_i^\dag K_j = \begin{cases}
(1-\eta) |i \rangle \langle j| \quad & i,j>0 \\
\eta \id_d                     \quad & j=i=0\\
\end{cases}
\end{equation}
and the $\beta_x = 0$ HKS conditions become
\begin{equation}
- | x \rangle\langle x | = h^x_{00} \eta \id + (1-\eta)\sum_{ij} h^x_{ij} |i \rangle \langle j|
\end{equation}
which entail
\begin{equation}
\begin{cases}
h^x_{00} \eta + h^x_{ii} (1-\eta) = 0 \quad & i \neq x \\
h^x_{00} \eta + h^x_{xx} (1-\eta) = -1 \\
h^x_{ij} (1-\eta) = 0 \quad & i \neq j
\end{cases}
\end{equation}
therefore, taking advantage of the symmetry of the problem, we can parametrize the matrices $h^x$ satisfying the constraint as follows
\begin{gather}
\label{eq:hmatsDiagGenLoss}
h^x_{00} = A \quad h^x_{xx} = -\frac{1+A\eta}{1-\eta} \\
 h^x_{ii} = -\frac{A \eta}{1-\eta} \quad h^x_{ij} = 0 \quad h^x_{0i} = c_i,
\end{gather}
but we simplify the calculation with the ansatz $c_i = 0$.
We use the simplified form~\eqref{eq:SQLnoiseafter} to compute the bound.
We obtain
\begin{gather}
H=\sum_{x=1}^\npar (h^x)^2 = \mathrm{diag}( X, Y, Y, \dots, Y )\\
 X= \npar A^2  \quad Y =  \frac{ (\npar-1) A^2 \eta^2 + (1+A\eta)^2}{(1-\eta)^2}
\end{gather}
and since $\sum_{x=1}^\npar G_x^2=\id$ we have the function
\begin{equation}
\label{eq:eigen1}
\norm{ -\id + \sum_{i,j=0}^d H_{ij} K_i^\dag K_j } = -1 + X \eta + (1-\eta)Y,
\end{equation}
which is minimized for $A=-1/\npar$, giving the bound $\CQFIbnd_\mathrm{diag}=\frac{\eta}{1-\eta}\frac{4 \npar }{(\npar-1)}$.

The previous calculation was obtained with $\npar=d \geq 2$, but we can repeat the same calculation with $d > \npar \geq 2$ and keep the same diagonal matrices $h^x$ in~\eqref{eq:hmatsDiagGenLoss} as before, obtaining now
\begin{gather}
H'=\sum_{x=1}^\npar (h^x)^2 = \mathrm{diag}( X, Y, Y, \dots, Y , Z \dots, Z)\\
Z = \frac{ \npar A^2 \eta^2}{(1-\eta)^2},
\end{gather}
where $Y$ is repeated $\npar$ times and $Z$ is repeated $d-\npar$ times.
The operator inside the norm now has one block $\left(-1 + X \eta + (1-\eta)Y\right) \id_\npar$ analogous to the previous one~\eqref{eq:eigen1} and another block $\left( Z (1-\eta) + \eta X \right) \id_{d-\npar}$, where $\id_\npar$ is the projector on the span of the first $\npar$ canonical vectors and $\id_{d-\npar}=\id - \id_{\npar}$.
The minimization of the operator norm produces the same result as in the previous case, since the optimal maximal eigenvalue always pertains to the first block.

\subsubsection{Off-diagonal generators}

We focus on the $d(d-1)/2$ real off-diagonal generators $G_{\mu \nu} = \frac{1}{2} \left( | \mu \rangle \langle \nu | + | \mu \rangle \langle \nu | \right)$ and we use the convention $\mu > \nu$.
The HKS conditions entail
\begin{equation}
-\frac{1}{2}  \left( |\mu \rangle \langle \nu | + |\nu \rangle \langle \mu |  \right) = h_{00}^{\mu \nu} \eta \id + (1-\eta) \sum_{i,j > 0} h^{\mu \nu}_{i j } | i \rangle \langle j |,
\end{equation}
for $i,j>0$ we have
\begin{equation}
\begin{cases}
h^{\mu \nu}_{i j} = -\frac{1}{2(1-\eta)} \left( \delta_{i \mu} \delta_{j \nu} + \delta_{i \nu} \delta_{j \mu} \right) \quad  &i \neq j \\
h^{\mu \nu}_{ii} = - \frac{\eta}{1-\eta} h^{\mu \nu}_{00} \quad &i=j\,.
\end{cases}
\end{equation}
We assume the following form
\begin{gather}
h^{\mu \nu}_{00} = A \quad h^{\mu \nu}_{i0} = 0 \, \forall i >0 \\
 h^{\mu \nu}_{ij} = -\frac{\eta}{1-\eta} A \delta_{ij} -\frac{1}{2(1-\eta)} \left( \delta_{i \mu} \delta_{j \nu} + \delta_{i \nu} \delta_{j \mu} \right),\nonumber
\end{gather}
from which we obtain
\begin{align}
& \sum_{\mu > \nu} (h^{\mu \nu})^2 = \\
& \frac{d(d-1) A^2 }{2} |0\rangle \langle 0| + \frac{d-1}{4 (1-\eta)^2} \left(1+2d\eta^2 A^2 \right) \sum_{i=1}^d | i \rangle \langle i | \nonumber\\
& + \frac{\eta A }{(1-\eta)^2}\sum_{\mu > \nu } \left( | \mu \rangle \langle \nu | + | \nu \rangle \langle \mu | \right)  \nonumber
\end{align}
This is a spherical model (according to the terminology of~\cite{Kura2017}): $\sum_{ \mu > \nu} (G_{\mu \nu})^2 = \frac{d - 1}{4} \id$ and the function to minimize in the bound~\eqref{eq:SQLnoiseafter} becomes
\begin{align}
\norm{ \frac{d-1}{4}\frac{\eta}{1-\eta} \id + \frac{d (d-1) A^2}{2} \frac{\eta}{1-\eta} \id + \frac{\eta A}{1-\eta}\sum_{i\neq j} |i \rangle \langle j| }  \nonumber
\end{align}
and the optimal choice is simply $A=0$ so the bound is $\CQFIbnd_\mathrm{real} = \frac{\eta}{1-\eta}(d-1)$.
For the imaginary off-diagonal elements we can repeat the same reasoning and we arrive at the same bound.

We notice that the three operators $\sum_x \alpha_x$ for the three submodels are all proportional to the identity and so they saturate the triangle inequality with equality as per~\eqref{eq:HamEstBnd}.

\subsection{Phase and loss}
\label{app:phaseloss}

First, we consider the two separate single-parameter problems.
For phase estimation we have the single-use bound $\CQFI_\varphi= \frac{4 \eta }{\left(\sqrt{\eta }+1\right)^2}$ corresponding to the optimal matrix $h_\varphi=\mathrm{diag}(-1+\frac{1}{1+\sqrt{\eta}},-1)$ and the asymptotic bound $\CQFIbnd_\varphi= \frac{4 \eta}{1-\eta}$ corresponding to the matrix $h_\varphi=\mathrm{diag}(0,\frac{-1}{1-\eta})$.
For the estimation of $\eta$ both bounds coincide and we have $\CQFI_\eta = \CQFIbnd_\eta=\frac{1}{\eta(1-\eta)}$ and the optimal $h_\eta=0$ means that the original Kraus operators are already the optimal purification.

The single-use incompatibility cost~\eqref{eq:probe_inc_cost} requires solving the following minimization
\begin{equation}
\min_{h_\phi, h_\eta} 4 \norm{ \frac{1}{\CQFI_\phi} \alpha_\phi + \frac{1}{\CQFI_\eta} \alpha_\eta  },
\end{equation}
and the optimal matrices are $h_\phi=\mathrm{diag}(-\frac{\eta ^{3/2}-\sqrt{2} \sqrt{\eta ^{3/2}+\eta }+\eta }{\eta ^{3/2}+\eta -\sqrt{\eta }-1},-1)$ and $h_\eta = 0$, from which we obtain~\eqref{eq:incompPhLoss}.

For the asymptotic incompatibility bound, the condition $\beta_\phi = 0$ constraints the matrix $h_\phi$ to be the same as in the single parameter case.
We obtain $4\alpha_\phi=\mathrm{diag}(\CQFIbnd_\phi,0)$ and $4\alpha_\eta=\mathrm{diag}(\CQFIbnd_\eta,0)$, from which we can clearly see that there is no probe incompatibility: $\mathfrak{I}_{\infty} = 1$ and $\CQFIbnd = \mathfrak{B}_{\phi} + \mathfrak{B}_{\eta}$.

\subsection{Phase and dephasing}
\label{app:phasedephasing}

This problem is particularly simple, since the optimal matrices are always identical to the single-parameter ones.
Regarding the parameter $\phi$ we have the single-use bound~\cite{Koodynski2013} $\CQFI_\varphi= \eta^2$ corresponding to: $h_\varphi= \frac{1}{2} \id + \frac{\sqrt{1-\eta^2}}{2}\sigma_x$ and the  asymptotic bound~\cite{Demkowicz-Dobrzanski2012} $\CQFIbnd_\varphi=\frac{\eta^2}{1-\eta^2}$ corresponding to $h_\varphi= \frac{1}{2} \id + \frac{1}{2 \sqrt{1-\eta^2}}\sigma_x$.
Analogously to loss estimation in the previous section, the Kraus operators are already optimal for estimating $\eta$, i.e. $h_\eta =0$, and we have $\CQFI_\eta = \CQFIbnd_\eta=\frac{1}{1-\eta^2}$.

For these optimal matrices we obtain $4 \alpha_\varphi= \CQFI_\phi \id$ for a single use (and analogously $4 \alpha_\varphi= \CQFIbnd_\varphi \id$ for the asymptotic case), and $4 \alpha_\eta = \CQFI_\eta \id $.
Therefore we see no probe incompatibility $\mathfrak{I} = \mathfrak{I}_{\infty} = 1$, $\CQFI = \CQFI_{\varphi} + \CQFI_{\eta}$ and $\CQFIbnd = \CQFIbnd_{\varphi} + \CQFIbnd_{\eta}$.

\subsection{Diagonal generators with qudit dephasing}
\label{app:quditdephasing}

We can use the following Kraus representation for the qudit dephasing channel
\begin{equation}
\label{eq:quditDephKrausAlt}
K_0 = \sqrt{ \eta  } \id, \quad K_j = \sqrt{1-\eta} |k \rangle \langle k | \;\; i=1,\dots,d,
\end{equation}
which is not minimal, since it has $d+1$ operators and the rank of the channel is $d$, but it is more convenient for the calculation.

We start by recalling the HKS $\beta_x = 0$ conditions
\begin{align}
 \I \sum_{l=0}^d ( \partial_x K_l^\dag ) K_l = - | x \rangle \langle x | = \sum_{i,j=1}^d h^{x}_{ij} K_i^\dag K_j;
\end{align}
the rhs becomes
\begin{multline}
\label{eq:rhshx}
h^x_{00} \eta \id + \sum_{j=1} (h^x_{0j} + h^x_{j0}) \sqrt{\eta(1-\eta)} |j \rangle \langle j | \\ + \sum_{k,j=1} h^x_{jk} (1-\eta) \delta_{jk} |k \rangle \langle k |
\end{multline}
and we assume that the matrices $h^{x}$ have real elements, obtaining
\begin{align}
h^x_{00} \eta \id + \sqrt{\eta(1-\eta)} \sum_{j=1} 2 h^x_{0j}  |j \rangle \langle j | + (1-\eta) \sum_{k=1} h^x_{kk} |k \rangle \langle k |.
\end{align}

Now, we make the following ansatz on the form of the matrices $h^{x}$
\begin{equation}
\begin{cases}
h^x_{00} = A \\
h^x_{0x} = h^x_{x0} = C \\
h^x_{0j} = h^x_{0j} = B \qquad & j > 0 \land j \neq x \\
h^x_{xx} = F \\
h^x_{xj} = h^x_{jx} = G\qquad  & j > 0 \land j \neq x\\
h^x_{jj} = D \qquad &  i > 0 \land j \neq x  \\
h^x_{ij} = h^x_{ji} = E \qquad & i,j > 0 \land i,j \neq x \, ;
\end{cases}
\end{equation}
so that all the $d$ matrices $\left\{ h^x \right\}_{x=1}^d$ are parametrized by seven real parameters.
More explicitly they look like this
\begin{equation}
h^x = \left[ \begin{array}{c| c c c c c c}
A & B & B & \dots  & C & \dots & B \\
\hline
B & D & E &   & G &  & E \\
B & E & D &   & G & \dots & E \\
\vdots  &   &   &  \ddots &   &  &  \\
C & G & G &   & F &  & G \\
\vdots &   &   & \vdots  &   & \ddots  &   \\
B & E & E &  & G &  & D\\
\end{array}
\right],
\end{equation}
where the column and row that stand out are the $x$-th ones (starting to count from 0).
We have numerical evidence that matrices in this form attain the minimum.
With this simplification \eqref{eq:rhshx} becomes
\begin{multline}
A \eta \id + 2 B \sqrt{\eta(1-\eta)} \left( \id - |x \rangle \langle x | \right)  \\ +2 C \sqrt{\eta(1-\eta)} |x \rangle \langle x |
+ D (1-\eta) \left( \id - |x \rangle \langle x | \right) \\ + F (1-\eta) |x \rangle \langle x |,
\end{multline}
and the condition $\beta_x = 0 \,\, \forall x$ are satisfied iff
\begin{equation}
\begin{cases}
A \eta + 2 B \sqrt{\eta (1-\eta)} + D (1-\eta) = 0 \\
A \eta + 2 C \sqrt{\eta(1-\eta)} + F (1-\eta) = -1,
\end{cases}
\end{equation}
from which we can eliminate two variables, i.e.
\begin{gather}
\label{eq:FDsols}
F=-\frac{1 + A \eta  + 2 C  \sqrt{(1 - \eta) \eta}}{1-\eta}\\
D = -\frac{A \eta  + 2 B  \sqrt{(1 - \eta) \eta}}{1-\eta}.
\end{gather}
We obtain
\begin{equation}
H=\sum_{x=1}^d (h^x)^2 =
\left[ \begin{array}{c| c c c c}
X & Z & Z & \dots & Z \\
\hline
Z & Y & T &   & T \\
Z & T & Y &   & T  \\
\vdots & & & \ddots &  \\
Z & T & T &   & Y \\
\end{array}
\right],\\
\end{equation}
where
\begin{align}
    X =& d \left[ A^2 + (d - 1) B^2 + C^2\right] \\
    Y =&  (d - 1)\left[D^2 + B^2 + (d-2) E^2 + G^2 \right] \\
       &+ C^2 + F^2 + (d-1) G^2 \nonumber  \\
    Z =& (d - 1)\left[ B(A + D) + (d-2) B E + G C \right] \\
       & + C (A + F) + B G (d-1). \nonumber
\end{align}
Finally, the bound~\eqref{eq:SQLnoiseafter} amounts to the following minimization
\begin{equation}
\min_{A,B,C,E,G} \left[ X \eta + 2 Z  \sqrt{(1-\eta)\eta}+ (1-\eta)Y -1 \right],
\end{equation}
from which we obtain
\begin{equation}
\CQFIbnd = \frac{ 4 (d-1) \eta^2}{(2+ d \eta)(1-\eta)} = \frac{4 \eta}{1-\eta} \frac{d-1}{d+ \frac{2}{\eta}}.
\end{equation}

\bibliography{2021PurificationMultipar}
\end{document}